\documentclass[a4paper,11pt,unpublished]{quantumarticle}
\pdfoutput=1
\usepackage[utf8]{inputenc}
\usepackage[english]{babel}
\usepackage[T1]{fontenc}
\usepackage{amsmath,amssymb,amsthm,mathtools}
\usepackage{graphicx}
\usepackage{booktabs}
\usepackage{microtype}
\usepackage{tikz}
\usetikzlibrary{positioning,arrows.meta,calc}
\usepackage[ruled,vlined,linesnumbered]{algorithm2e}
\usepackage{hyperref}

\theoremstyle{plain}
\newtheorem{theorem}{Theorem}[section]

\newtheorem{proposition}[theorem]{Proposition}

\newtheorem{conjecture}[theorem]{Conjecture}
\theoremstyle{definition}
\newtheorem{definition}[theorem]{Definition}
\newtheorem{openproblem}[theorem]{Open Problem}
\theoremstyle{remark}
\newtheorem{remark}[theorem]{Remark}

\newcommand{\Samp}{\mathsf{Samp}}
\newcommand{\Ver}{\mathsf{Ver}}
\newcommand{\negl}{\mathsf{negl}}
\newcommand{\poly}{\mathrm{poly}}
\newcommand{\GapK}{\mathsf{GapK}}
\newcommand{\Gap}{\mathsf{Gap}}
\newcommand{\pKq}{\mathrm{pK}_{\mathsf{q}}}
\newcommand{\pKqt}{\mathrm{pK}_{\mathsf{q}}^{t}}
\newcommand{\pKqs}[1]{\mathrm{pK}_{\mathsf{q}}^{#1}}
\newcommand{\pKt}{\mathrm{pK}^{t}}
\newcommand{\Kt}{\mathrm{K}^{t}}
\newcommand{\Kol}{\mathrm{K}}
\newcommand{\QPE}{\mathsf{QPE}}
\newcommand{\SD}{\mathsf{SD}}
\newcommand{\Est}{\mathsf{Estimate}}
\newcommand{\Extr}{\mathsf{Ext}}
\newcommand{\ans}{\mathsf{ans}}
\newcommand{\puzz}{\mathsf{puzz}}
\newcommand{\yes}{\mathsf{yes}}
\newcommand{\no}{\mathsf{no}}
\newcommand{\Lyes}{\mathcal{L}_{\mathsf{Yes}}}
\newcommand{\Lno}{\mathcal{L}_{\mathsf{No}}}
\newcommand{\eps}{\varepsilon}
\newcommand{\F}{\mathbb{F}}

\begin{document}

\title{Quantum Meta-Complexity Is All You Need: Characterizing One-Way Puzzles via Time-Bounded Kolmogorov Complexity}

\author{Morteza Saberikamarposhti}
\affiliation{Department of Smart Computing and Cyber Resilience, School of Computing and Artificial Intelligence, Faculty of Engineering and Technology, Sunway University, Petaling Jaya, Malaysia}

\maketitle

\begin{abstract}
We initiate the time-bounded meta-complexity program for quantum cryptography. Recent breakthroughs characterize one-way puzzles, the minimal search primitive of Microcrypt, by the average-case hardness of approximating the plain, uncomputable Kolmogorov complexity over quantumly samplable distributions; the classical program of Liu and Pass, by contrast, lives at polynomial time bounds. We define a probabilistic time-bounded quantum program complexity $\pKqt$ for classical strings and prove two unconditional theorems. First, a quantum coding theorem: any string output by a quantum polynomial-time sampler with probability $\delta$ admits a description of the information-theoretically optimal length $\log(1/\delta)$ plus logarithmic terms, decodable by a quantum machine in time $O(\sqrt{1/\delta})$ times a polynomial, via amplitude amplification over the coherently executed sampler. Second, as a consequence, an exact characterization at subexponential time: one-way puzzles exist if and only if the gap problem for $\pKqs{t^{*}}$ with $t^{*}(n) = 2^{n/2} \poly(n)$ is weakly quantum-average-hard, strictly refining the plain-complexity characterizations. We then isolate the polynomial-time coding theorem as the program's single load-bearing conjecture, prove that it implies the full polynomial-time characterization with every other component unconditional, analyze why the classical derandomization proof resists quantization, and formulate a relativized barrier conjecture: no string-valued meta-complexity problem should characterize one-way state generators, so that quantum meta-complexity on classical strings is exactly what one-way puzzles need and provably less than what state generators demand. Conjectures are labeled as such throughout; an exact numerical illustration of the probability-mass structure used by the arguments is included with reproducible code.
\end{abstract}

\section{Introduction}
\label{sec:intro}

Classical cryptography rests on the one-way function, and the one-way function now rests on meta-complexity. Liu and Pass \cite{LP20} proved that one-way functions exist if and only if the polynomial-time-bounded Kolmogorov complexity $\Kt$ is mildly hard on average; Ilango, Ren, and Santhanam \cite{IRS21} extended the equivalence to hardness over arbitrary polynomial-time samplable distributions, and the surrounding machinery of probabilistic time-bounded complexity and efficient coding theorems \cite{GKLO22,LOZ22} turned meta-complexity into a kind of assembly language for cryptography: every question about the existence of secure cryptosystems compiles down, without loss, to a question about the average-case difficulty of measuring the complexity of strings.

Quantum cryptography does not reduce to this picture, and the failure is productive. Relative to suitable oracles, quantum cryptographic primitives survive the death of one-way functions and indeed of $\mathsf{P} \neq \mathsf{NP}$ \cite{Kre21,KQST23,LMW24}. The zoo below one-way functions, Microcrypt, contains pseudorandom state generators \cite{JLS18,AQY22}, one-way state generators \cite{MY22,MY24}, one-way puzzles \cite{KT24a}, and EFI pairs \cite{BCQ23}. Among these, the meta-complexity program landed first on the \emph{one-way puzzle} of Khurana and Tomer \cite{KT24a}, and there is a conceptual gem in why. A one-way puzzle permits its verifier to be computationally unbounded: only the sampler and the adversary are efficient. Kolmogorov complexity is precisely an inefficiently verifiable quantity: the statement $\Kol(x) \le k$ has a short witness, the program, whose validity no efficient machine can certify in general. Puzzles tolerate inefficient verification, and $\Kol$ is inefficient verification incarnate; the two were made for each other. The thesis of this paper, in one sentence: \emph{one-way puzzles are equivalent to the average-case hardness of measuring the time-bounded quantum Kolmogorov complexity of quantumly samplable strings, and this equivalence is both achievable at subexponential time bounds today and, at the level of state generators, provably the end of what classical strings can express}.

The known quantum characterizations, proved concurrently by Hiroka and Morimae \cite{HM25}, by Cavalar, Goldin, Gray, and Hall \cite{CGGH25}, and, through probability estimation, by Khurana and Tomer \cite{KT24b}, establish that one-way puzzles exist if and only if a gap version of Kolmogorov complexity is weakly hard on average over quantumly samplable distributions, with further foundational study in \cite{CGGHLP25} and a state-native characterization of EFI pairs in \cite{CCCGHJL26}. Two gaps separate these results from the classical gold standard, and they are the two gaps this paper is about. First, the \emph{temporal} gap: the characterizing quantity is the plain, uncomputable $\Kol$, because the proofs compress high-probability samples by brute-force enumeration of the sampler's distribution, a description that is short and absurdly slow to decode; the classical program removed the same inefficiency with coding theorems \cite{GKLO22,LOZ22}, and no quantum coding theorem was known. Second, the \emph{descriptive} gap: the measured objects are classical strings processed by classical machines, while the native language of quantum descriptions, in the tradition of Berthiaume, van Dam, and Laplante \cite{BvDL01}, Vit\'anyi \cite{Vit01}, and G\'acs \cite{Gacs01}, has quantum machines and, ultimately, quantum states. This paper closes a precise portion of the first gap unconditionally, reduces the remainder to a single conjecture, and argues that the second gap is a one-way door: quantum \emph{machines} on classical strings are exactly what puzzles need, and classical strings of any kind are provably not what state generators need.

\section{Our results}
\label{sec:results}

\subsection*{Contributions}

In brief: we define a probabilistic time-bounded quantum program complexity $\pKqt$ for classical strings; we prove an unconditional coding theorem for quantum samplers achieving the information-theoretically optimal description length in time $O(\sqrt{1/\delta})$ (Theorem \ref{thm:codingroot}); we deduce an unconditional characterization of one-way puzzles by the average-case hardness of $\Gap\text{-}\pKqs{t^{*}}$ at $t^{*} = 2^{n/2}\poly(n)$ (Theorem \ref{thm:mainsubexp}), refining the plain-complexity characterizations of \cite{HM25,CGGH25,KT24b}; and we prove that a polynomial-time coding theorem (Conjecture \ref{conj:codingpoly}) is the single missing ingredient for the full polynomial-time characterization (Theorem \ref{thm:mainpoly}). The decoder is an instance of quantum rejection sampling \cite{ORR13}; the coding statement for quantum samplers, and the characterization it yields, are new. Assumptions: adversaries are uniform QPT throughout, hardness is over quantumly samplable distributions, and all results other than Theorem \ref{thm:mainpoly} and the conjectures are unconditional.

We state the four headline items informally; formal statements appear in the indicated sections, and each item carries its epistemic status explicitly. Theorems \ref{thm:codingroot} and \ref{thm:mainsubexp} are unconditional contributions of this paper. Theorem \ref{thm:mainpoly} is conditional on Conjecture \ref{conj:codingpoly}. Conjectures \ref{conj:codingpoly}, \ref{conj:amp}, and \ref{conj:barrier} are open, each stated with a proof strategy and its obstacles.

\paragraph{Result A: the characterization.} Fix the probabilistic time-bounded quantum program complexity $\pKqt$ of Definition \ref{def:pKq}: the least length of a classical program from which a universal quantum machine, aided by public classical randomness, reconstructs the string within $t$ steps with constant probability.

\begin{quote}
\emph{Theorem \ref{thm:mainsubexp} (unconditional, Section \ref{sec:proofA}).} One-way puzzles exist if and only if $\Gap\text{-}\pKqs{t^{*}}[\,n - n^{\eps}, n - \Delta\,]$ is weakly quantum-average-hard over some quantumly samplable distribution, for $t^{*}(n) = 2^{n/2} \poly(n)$, some $0 < \eps < 1$, and some $\Delta = \omega(\log n)$.

\emph{Theorem \ref{thm:mainpoly} (conditional on Conjecture \ref{conj:codingpoly}, Section \ref{sec:proofA}).} The same equivalence holds with $t^{*}$ replaced by a fixed polynomial.
\end{quote}

Theorem \ref{thm:mainsubexp} strictly refines the plain-complexity characterizations of \cite{HM25,CGGH25}, which correspond to the limit $t = \infty$: the characterizing quantity becomes computable in time $2^{n/2}\poly(n)$, while the adversaries remain quantum polynomial time throughout, the time bound parametrizing only the measure. The gap between $2^{n/2}$ and $\poly$ is precisely the content of the coding conjecture below.

\paragraph{Result B: the quantum coding theorem.} The engine of Result A, and the technical heart of the paper.

\begin{quote}
\emph{Theorem \ref{thm:codingroot} (unconditional, Section \ref{sec:proofB}).} If a uniform QPT sampler with running time $T(n)$ outputs $x \in \{0,1\}^n$ with probability at least $\delta$, then $\pKqt(x) \le \log(1/\delta) + O(\log(n T))$ for every $t(n) \ge \sqrt{1/\delta} \cdot (nT(n))^{O(1)}$. The description length is information-theoretically optimal up to the logarithmic term; the decoder is quantum rejection sampling \cite{ORR13}, amplitude amplification over the coherently executed sampler, filtered by an almost-universal hash whose seed rides on the public randomness.

\emph{Conjecture \ref{conj:codingpoly} (open, Section \ref{sec:proofB}).} The same bound holds with $t = \poly(n, T)$.
\end{quote}

Classically, the identical hash-and-search construction requires time $(1/\delta) \poly(n)$, so the unconditional theorem's content is a genuinely quantum quadratic speedup in decoding, and the sampler must in any case be run by a quantum machine, so no classical-machine analogue exists at any subexponential time. The polynomial-time statement quantizes the optimal coding theorem of Lu, Oliveira, and Zimand \cite{LOZ22}, whose classical proof derandomizes an Antunes-Fortnow-style argument by fixing and re-running the sampler's seed; a quantum sampler has no seed to fix, and Section \ref{sec:proofB} dissects the obstruction.

\paragraph{Result C: amplification.} The puzzle side of the amplification needed by Result A is already unconditional: one-way puzzles are existentially equivalent to their distributional variant by Chung, Goldin, and Gray \cite{CGG24}, and to non-uniform quantum pseudorandom generators \cite{KT24a,CGG24}. What remains open is amplification entirely on the meta-complexity side.

\begin{quote}
\emph{Conjecture \ref{conj:amp} (open, Section \ref{sec:ampl}).} Weak and strong quantum-average-hardness of $\Gap\text{-}\pKqt$ are equivalent, robustly in the gap width and the time bound.
\end{quote}

\paragraph{Result D: the barrier.} The theorem-shaped statement that would make the title exact.

\begin{quote}
\emph{Conjecture \ref{conj:barrier} (open, Section \ref{sec:barrier}).} There is a relativized world, built on a Haar-random unitary oracle, in which one-way state generators exist while every string-valued meta-complexity estimation problem over samplable distributions is easy. Thus quantum meta-complexity on classical strings is all one-way puzzles need, and provably less than one-way state generators demand.
\end{quote}

The unconditional evidence is assembled in Section \ref{sec:barrier}: the $\mathsf{NP}$/$\mathsf{QMA}$ barrier for puzzle characterizations of \cite{CGGH25}, the oracle worlds of \cite{Kre21,KQST23}, the one-query unitary-synthesis lower bound of \cite{LMW24}, and the fact that the EFI characterization of \cite{CCCGHJL26} was forced to state-native measures.

Figure \ref{fig:landscape} places all four results on the map.

\begin{figure}[t]
\centering
\resizebox{0.98\linewidth}{!}{%
\begin{tikzpicture}[
  prim/.style={draw, rounded corners=2pt, minimum width=3.1cm, minimum height=0.72cm, align=center, font=\small},
  meta/.style={draw, rounded corners=2pt, minimum width=4.6cm, minimum height=0.72cm, align=center, font=\small, fill=black!4},
  imp/.style={-{Stealth[length=2.2mm]}, semithick},
  eqv/.style={{Stealth[length=2.2mm]}-{Stealth[length=2.2mm]}, semithick},
  neqv/.style={{Stealth[length=2.2mm]}-{Stealth[length=2.2mm]}, semithick, blue!60!black},
  ceqv/.style={{Stealth[length=2.2mm]}-{Stealth[length=2.2mm]}, semithick, dashed, red!70!black},
  lab/.style={font=\scriptsize, midway}
]
\node[prim] (owf) {OWF};
\node[prim, below=0.55cm of owf] (prs) {PRS \cite{JLS18}};
\node[prim, below=0.55cm of prs] (owsg) {OWSG \cite{MY22}};
\node[prim, below=0.55cm of owsg] (owpuzz) {OWPuzz \cite{KT24a}};
\node[prim, below=0.55cm of owpuzz] (efi) {EFI \cite{BCQ23}};

\node[meta, right=1.7cm of owf] (mowf) {$\Kt$ hard on samplable \cite{LP20,IRS21}};
\node[meta, right=1.7cm of owsg] (mowsg) {\textbf{?} (Conjecture \ref{conj:barrier}: none)};
\node[meta, right=1.7cm of owpuzz, yshift=0.60cm] (mpuzz) {$\GapK$ hard, plain $\Kol$ \cite{HM25,CGGH25}};
\node[meta, right=1.7cm of owpuzz] (mpuzzs) {$\Gap\text{-}\pKqs{2^{n/2}\poly}$ hard (Thm.~\ref{thm:mainsubexp})};
\node[meta, right=1.7cm of owpuzz, yshift=-0.60cm] (mpuzzt) {$\Gap\text{-}\pKqs{\poly}$ hard (Thm.~\ref{thm:mainpoly}, cond.)};
\node[meta, right=1.7cm of efi, yshift=-0.34cm] (mefi) {single-copy state complexity \cite{CCCGHJL26}};

\draw[imp] (owf) to (prs);
\draw[imp] (prs) to node[lab, right] {\cite{MY22}} (owsg);
\draw[imp] (owsg) to node[lab, right] {\cite{KT24a,MY24}} (owpuzz);
\draw[imp] (owpuzz) to node[lab, right] {\cite{KT24a}} (efi);

\draw[eqv] (owf) to (mowf);
\draw[eqv] (owpuzz.east) to (mpuzz.west);
\draw[neqv] (owpuzz.east) to (mpuzzs.west);
\draw[ceqv] (owpuzz.east) to (mpuzzt.west);
\draw[eqv] (efi.east) to (mefi.west);
\end{tikzpicture}}%
\caption{The meta-complexity map of Microcrypt. Left: primitives with known implications; oracle separations in the reverse directions are known or conjectured \cite{Kre21,KQST23,LMW24}. Right: characterizing problems. Black double arrows are prior equivalences; the blue double arrow is this paper's unconditional Theorem \ref{thm:mainsubexp}; the dashed red double arrow is the conditional Theorem \ref{thm:mainpoly}; the question mark is Conjecture \ref{conj:barrier}.}
\label{fig:landscape}
\end{figure}

\section{Technical overview}
\label{sec:overview}

\paragraph{Hardness implies puzzles.} Suppose $\Gap\text{-}\pKqt$ is weakly hard over a QPT sampler $\mathcal{Q}$. The route, following \cite{HM25,KT24b,CGGH25}, passes through \emph{quantum probability estimation}: the task of multiplicatively approximating the mass $p_{\mathcal{Q}}(x)$ of a given sample. If puzzles do not exist, then distributional puzzles do not exist \cite{CGG24}, and a bitwise extrapolation argument turns their absence into a QPT estimator for $p_{\mathcal{Q}}$: estimate each conditional bit probability empirically and multiply (Algorithm \ref{alg:estimate}). An estimator for the mass, in turn, decides the gap problem by thresholding (Algorithm \ref{alg:threshold}): samples of large estimated mass are declared simple, the rest complex. Soundness of the threshold has two halves. Strings declared complex but actually simple are controlled by a counting, or antichain, argument: at most $2^{k+2}$ strings have $\pKqt$ at most $k$, because a fixed program can $2/3$-produce at most one string (Proposition \ref{prop:counting}), so simple strings of small mass carry little weight. Strings declared simple must actually be simple, and this is exactly where a coding theorem is consumed: large mass under a QPT sampler must force small $\pKqt$. At time bound $2^{n/2}\poly(n)$ our unconditional Theorem \ref{thm:codingroot} supplies the implication; at polynomial time it is Conjecture \ref{conj:codingpoly}.

\paragraph{Puzzles imply hardness.} A puzzle yields, through \cite{KT24a,CGG24} and parallel repetition with padding \cite{HM25}, a non-uniform quantum pseudorandom generator whose output distribution is exponentially far from uniform yet computationally indistinguishable from it. Mix that generator's output with the uniform distribution. Uniform strings are complex except with probability $2^{-\Delta + 1}$, by counting. Generator outputs concentrate, by a mass-partition argument, on strings of mass at least $2^{-n + n^{\tau}}$ up to the advice loss, and the coding theorem compresses precisely these: the key attached to a sample is, morally, its near-minimal quantum program, whose existence the coding theorem guarantees. A decider for the gap problem therefore distinguishes the generator from uniform, contradicting indistinguishability. Both directions consume the coding theorem and nothing else beyond time-agnostic components; this is the sense, made exact in Theorem \ref{thm:mainpoly}, in which the coding theorem is the single load-bearing wall.

\paragraph{The coding theorem.} Given the sampler's code, the target's mass $\delta$, and a fingerprint of the target, decode as follows. Run the sampler \emph{coherently}, filter the output register through an almost-universal hash whose seed is read from the public randomness, and amplitude-amplify the branch whose hash matches the stored fingerprint. The fingerprint has length $\log(1/\delta)$ plus logarithms, which controls both the description length and, in expectation over the seed, the mass of colliding impostors; amplitude amplification with unknown initial amplitude, the quantum rejection sampling primitive \cite{BBHT98,BHMT02,ORR13}, reaches the matching branch in $O(\sqrt{1/\delta})$ coherent executions of the sampler; and within the matching branch the target dominates the impostors. Classically the same construction must sample until the fingerprint matches, paying $1/\delta$; the quadratic saving is irreducible for any decoder that touches the sampler only as a state-preparation oracle, by the tight query characterization of quantum resampling \cite{ORR13}, which is why we expect Conjecture \ref{conj:codingpoly} to require non-black-box use of the sampler's code, exactly as the classical proof of \cite{LOZ22} uses the sampler's tape structure.

\paragraph{The barrier, and why Haar rather than diagonalization.} For Conjecture \ref{conj:barrier} one must build a single world in which one-way state generators are secure against \emph{all} QPT adversaries with oracle access, while \emph{every} string-valued estimation problem is easy. Diagonalization constructs oracles bit by bit against a countable list of machines, which is well suited to defeating machines one at a time but poorly suited to granting a measure-one security property against all of them simultaneously; the modern route, following \cite{Kre21,KQST23}, is a Haar-random unitary oracle, whose concentration-of-measure properties give security against every efficient adversary at once, combined with a powerful classical oracle that collapses string-level complexity, the survival of unitary security against such classical power being exactly what the one-query lower bound of \cite{LMW24} evidences. Section \ref{sec:barrier} details the plan and its open steps.

An exact numerical illustration of the surprisal structure that all of the counting arguments manipulate, on a real simulated sampler, is provided in Appendix \ref{app:numerics}.

\section{Preliminaries}
\label{sec:prelim}

$n$ is the security parameter; QPT and PPT are quantum and classical probabilistic polynomial time; $\negl$ is a negligible function; $\SD$ is statistical distance. A \emph{quantumly samplable distribution} is the output distribution of a uniform QPT algorithm $\mathcal{Q}(1^n)$ over classical strings, by default $n$-bit strings. We write $p_{\mathcal{Q}}(x) := \Pr[x \leftarrow \mathcal{Q}(1^n)]$, and in subscripts abbreviate $x \leftarrow \mathcal{Q}$ and $(\ans, \puzz) \leftarrow \Samp$ with the security parameter implicit. For Kolmogorov complexity background see \cite{LV19}.

\subsection{The zoo}

\begin{definition}[One-way puzzle \cite{KT24a}]
\label{def:owpuzz}
A one-way puzzle is a pair $(\Samp, \Ver)$: $\Samp(1^n) \rightarrow (\ans, \puzz)$ is QPT and outputs two classical strings; $\Ver(\ans', \puzz) \rightarrow \top/\bot$ is unbounded. Correctness:
\begin{equation}
\Pr_{(\ans,\puzz) \leftarrow \Samp}\big[\top \leftarrow \Ver(\ans, \puzz)\big] \ge 1 - \negl(n).
\end{equation}
Security: for every uniform QPT adversary $\mathcal{A}$ (receiving $1^n$ implicitly),
\begin{equation}
\Pr_{(\ans,\puzz) \leftarrow \Samp}[\top \leftarrow \Ver(\mathcal{A}(\puzz), \puzz)] \le \negl(n).
\end{equation}
\end{definition}

One-way state generators (OWSG) \cite{MY22,MY24} replace the classical puzzle by copies of a quantum state; EFI pairs \cite{BCQ23} are efficiently samplable, statistically far, computationally indistinguishable state pairs, equivalent to quantum bit commitments \cite{BCQ23,Yan22} and implied by puzzles \cite{KT24a}, with secure computation following \cite{BCKM21,GLSV21}. The implication diagram is in Figure \ref{fig:landscape}.

\subsection{Complexity measures}

Fix a universal Turing machine $U$ and a universal quantum machine $U_{\mathsf{Q}}$ that takes a classical program $d$ and a classical auxiliary string $r$, runs for an allotted number of steps, and measures a designated output register; $U_{t}$ and $U_{\mathsf{Q},t}$ denote the machines halted after $t(|x|)$ steps. Then $\Kol(x) := \min\{|d| : U(d) = x\}$ and $\Kt(x) := \min\{|d| : U_{t}(d) = x\}$. The probabilistic time-bounded complexity $\pKt(x)$ of \cite{GKLO22} is the least $k$ such that
\begin{equation}
\Pr_{r}\big[\exists d,\ |d| \le k :\ U_{t}(d, r) = x\big] \ge \tfrac{2}{3},
\end{equation}
with $r \leftarrow \{0,1\}^{t(|x|)}$ uniform: public randomness available to the existential quantifier.

\begin{definition}[Probabilistic time-bounded quantum program complexity]
\label{def:pKq}
$\pKqt(x)$ is the least $k$ such that
\begin{equation}
\Pr_{r}\big[\, \exists d,\ |d| \le k :\ \Pr[U_{\mathsf{Q},t}(d, r) = x] \ge \tfrac{2}{3} \,\big] \ge \tfrac{2}{3},
\end{equation}
with $r \leftarrow \{0,1\}^{t(|x|)}$ uniform and the inner probability over the machine's measurements.
\end{definition}

This is the direct quantum analogue of $\pKt$: public classical randomness, classical program, quantum decoder. Classical programs are forced on us by the goal, a \emph{string}-input meta-complexity problem; qubit-program measures in the tradition of \cite{BvDL01,Gacs01} and classical-description state measures \cite{Vit01} belong to the state-native axis of \cite{CCCGHJL26} and reappear in Section \ref{sec:barrier}.

\begin{proposition}[Counting]
\label{prop:counting}
For all $t$, $n$, $k$: $\ \big|\{x \in \{0,1\}^n : \pKqt(x) \le k\}\big| \le 2^{k+2}$.
\end{proposition}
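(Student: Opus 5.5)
The plan is a double-counting argument over pairs $(r, x)$, where $r$ is the public random string and $x$ ranges over the strings in the set $S_k := \{x \in \{0,1\}^n : \pKqt(x) \le k\}$. The argument has two ingredients. First, an antichain fact: for a fixed program $d$ and a fixed random string $r$, at most one string $x$ can satisfy $\Pr[U_{\mathsf{Q},t}(d,r)=x] \ge \tfrac{2}{3}$. The reason is that the measurement outcomes of $U_{\mathsf{Q},t}(d,r)$ form a probability distribution, and two distinct outcomes each of mass at least $\tfrac23$ would give total mass at least $\tfrac43 > 1$. Second, the number of programs of length at most $k$ is $\sum_{j \le k} 2^j < 2^{k+1}$. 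Combining the two, for each fixed $r$ the set $G_r := \{x : \exists d,\ |d| \le k,\ \Pr[U_{\mathsf{Q},t}(d,r)=x] \ge \tfrac23\}$ satisfies $|G_r| < 2^{k+1}$.

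Next I average over $r$. All strings $x \in \{0,1\}^n$ share the same length $n$, so they use the same time bound $t(n)$ and the same distribution $r \leftarrow \{0,1\}^{t(n)}$; this makes the averaging legitimate, since there is a single probability space for every $x$ in the set. By Definition \ref{def:pKq}, each $x \in S_k$ satisfies $\Pr_r[x \in G_r] \ge \tfrac23$. Summing over $x \in S_k$ and exchanging sum and expectation gives
\begin{equation}
\tfrac23 |S_k| \le \sum_{x \in S_k} \Pr_r[x \in G_r] = \mathbb{E}_r\big[|G_r \cap S_k|\big] < 2^{k+1},
\end{equation}
so $|S_k| < \tfrac32 \cdot 2^{k+1} \le 2^{k+2}$, which is the claimed bound.

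I expect no genuine obstacle here. The only points needing care are the following. The inner threshold $\tfrac23$ must be strictly above $\tfrac12$ for uniqueness to hold, and it is. The outer threshold $\tfrac23$ costs only a constant factor, and that factor is absorbed by the slack between $2^{k+1}$ and $2^{k+2}$. The statement must be uniform in $t$, which it is, because nothing in the argument uses the value of $t$ beyond fixing a common randomness length at input length $n$. If the universal machine's output convention allows a non-string or ``no output'' outcome, this only lowers the mass available to strings, so it strengthens the uniqueness argument.
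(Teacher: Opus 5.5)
Your proposal is correct and is essentially the paper's proof: the paper also fixes $r$, bounds the set of strings $\tfrac23$-produced by some program of length at most $k$ by $2^{k+1}-1$ via the same uniqueness observation, and averages over $r$ to get $\tfrac23\,\big|\{x : \pKqt(x) \le k\}\big| \le 2^{k+1}$, hence at most $3 \cdot 2^{k} \le 2^{k+2}$. Your remark that all $n$-bit strings share the randomness length $t(n)$, so the averaging takes place in a single probability space, makes explicit a point the paper leaves implicit.
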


\begin{proof}
Fix $r$ and let $S_r := \{x : \exists d, |d| \le k, \Pr[U_{\mathsf{Q},t}(d,r) = x] \ge 2/3\}$. Two distinct strings cannot both be output with probability at least $2/3$ by the same $(d, r)$, so $|S_r| \le 2^{k+1} - 1$. If $\pKqt(x) \le k$ then $\Pr_r[x \in S_r] \ge 2/3$, hence $\tfrac{2}{3}\,\big|\{x : \pKqt(x) \le k\}\big| \le \mathbb{E}_r\big[|S_r|\big] \le 2^{k+1}$, giving the bound $3 \cdot 2^{k} \le 2^{k+2}$.
\end{proof}

\begin{proposition}[Classical simulation]
\label{prop:simulation}
There are a polynomial $p$ and a constant $c$ with $\pKqs{p(t)}(x) \le \pKt(x) + c$ and $\pKqs{p(t)}(x) \le \Kt(x) + c$ for all $x, t$.
\end{proposition}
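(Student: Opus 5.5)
The plan is to simulate classical decoding inside the quantum universal machine, so that every classical description becomes a quantum description at a cost of one constant-length prefix and a polynomial slowdown. I first prove the $\pKt$ inequality. Then I obtain the $\Kt$ inequality either by the same argument with the randomness ignored, or from the standard fact $\pKt(x) \le \Kt(x)$. The latter holds because a deterministic program that ignores $r$ witnesses $\pKt$ with probability $1$ over $r$.

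\emph{Step 1 (a simulation prefix).} Fix the universal quantum machine $U_{\mathsf{Q}}$. By universality, including efficient universality for reversible or quantum simulation of classical computation, there is a fixed classical string $\sigma$ with the following property: on input $(\sigma d, r)$, the machine $U_{\mathsf{Q}}$ runs the classical universal machine $U$ on $(d, r)$, deterministically and with every measurement outcome fixed, and writes $U$'s output to the designated output register. It does this within $p(t)$ steps whenever $U$ halts within $t$ steps, for some polynomial $p$ that depends only on the two universal machines. I encode the pair $(\sigma, d)$ self-delimitingly by placing the fixed header $\sigma$ in front of $d$. This costs the additive constant $c = |\sigma|$, plus at most $O(1)$ more for a flag, with no length-dependent overhead, since $\sigma$ is fixed and $d$ simply fills the rest of the program.

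\emph{Step 2 (transfer the probability over $r$).} Let $k = \pKt(x)$. The set $R$ of strings $r$ for which some $d$ with $|d| \le k$ satisfies $U_t(d,r) = x$ has probability at least $2/3$. For every $r \in R$, the program $\sigma d$ has length at most $k + c$, and $U_{\mathsf{Q},p(t)}(\sigma d, r) = x$ holds with probability $1 \ge 2/3$ over the measurements. This holds because a classical computation embedded in a quantum machine yields a computational-basis state, so the final measurement is deterministic.

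There is one technical wrinkle. The randomness string in Definition \ref{def:pKq} has length $p(t)(|x|)$ rather than $t(|x|)$. The simulator reads only the first $t(|x|)$ bits of it. A uniform prefix of a uniform string is uniform, so the probability over $r$ is still at least $2/3$. This gives $\pKqs{p(t)}(x) \le k + c$.

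\emph{Main obstacle.} The only step with real content is Step 1, namely that $U_{\mathsf{Q}}$ simulates $U$ with polynomial overhead and a constant-length header. I would justify this by citing efficient universality of the quantum Turing machine model (Bernstein–Vazirani), combined with Bennett's reversible simulation of classical computation, which has polynomial time overhead. If the model's conventions force the time overhead to depend on the program length, I would absorb this into $p$, since $|d| \le t$ without loss of generality.
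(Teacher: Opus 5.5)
Your proposal is correct and is essentially the paper's proof, which says only that $U_{\mathsf{Q}}$ simulates $U$ with polynomial overhead and success probability $1$, with a constant-size selector prepended to the program. Your handling of the public random string having length $p(t)(|x|)$ rather than $t(|x|)$ fills in details the paper leaves implicit. The same goes for your derivation of the $\Kt$ bound, either by ignoring $r$ or via $\pKt(x) \le \Kt(x)$, which holds at worst up to an additive constant that is absorbed into $c$.
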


\begin{proof}
$U_{\mathsf{Q}}$ simulates $U$ with polynomial overhead and success probability $1$; prepend a constant-size selector.
\end{proof}

\begin{proposition}[Amplification]
\label{prop:amplification}
For every $\lambda$, the inner threshold $2/3$ in Definition \ref{def:pKq} may be replaced by $1 - 2^{-\lambda}$ at multiplicative cost $O(\lambda)$ in time and additive $O(1)$ in program length: the modified universal machine reruns the program $O(\lambda)$ times and outputs the majority string, which is correct by a Chernoff bound because the target is the unique outcome of per-run probability above $1/2$. Consequently all constants in $(1/2, 1)$ define the same measure up to additive constants, and similarly for the outer threshold by standard arguments.
\end{proposition}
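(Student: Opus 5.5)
The plan is to build a modified universal machine $U'_{\mathsf{Q}}$ and show it simulates the stronger $(1-2^{-\lambda})$-threshold measure from the $2/3$-threshold measure, and conversely; then invariance of the universal machine turns this into the stated additive and multiplicative costs. The direction that needs work is the one that starts from a program $d$ with $\Pr[U_{\mathsf{Q},t}(d,r)=x] \ge 2/3$ and must produce $x$ with probability $1-2^{-\lambda}$. The converse is trivial, because $1-2^{-\lambda} \ge 2/3$ once $\lambda \ge 2$.

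First I construct the wrapper. On input $(\langle \lambda \rangle, d, r)$, $U'_{\mathsf{Q}}$ runs $U_{\mathsf{Q},t}(d,r)$ independently $m = C\lambda$ times. Each run uses the same public string $r$ but fresh internal measurements. It collects the $m$ measured outputs and outputs the plurality string, which must also occur in more than $m/2$ runs; otherwise it outputs a default symbol. The fresh measurements make the runs independent conditioned on $r$. The encoding of $\lambda$ costs $O(\log\lambda)$ bits. To get a genuinely additive $O(1)$ cost I will treat $\lambda$ as a fixed parameter hardwired into the universal machine for the measure being compared, so only a constant-size selector prefix is needed, as in Proposition \ref{prop:simulation}. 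Running time is $m \cdot t$ plus the bookkeeping for the majority, which is $O(\lambda t)$ up to the polynomial simulation overhead already absorbed in the paper's conventions.

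Second comes correctness, for fixed $r$ in the good set. Let $X_i$ indicate that run $i$ outputs $x$. Then $\mathbb{E}[X_i] \ge 2/3$, so by a Chernoff--Hoeffding bound $\Pr[\sum X_i \le m/2] \le e^{-2m(1/6)^2} = e^{-m/18}$. With $C = 18$ (times $\ln 2$ slack) this is at most $2^{-\lambda}$. If $x$ appears in more than half the runs, it is the unique majority, and the output is $x$. The outer quantifier is untouched: the set of good $r$ is the same, since for each good $r$ the same witness $d$ (with prefix) now succeeds with probability $1-2^{-\lambda}$. This gives $\mathrm{pK}_{\mathsf{q}}^{O(\lambda t),\,1-2^{-\lambda}}(x) \le \pKqt(x) + O(1)$.

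For the constants in $(1/2,1)$, the same argument works with any inner threshold $\theta > 1/2$: the Chernoff exponent becomes $2(\theta-1/2)^2 m$, and uniqueness still holds because two strings cannot each exceed half of the outcomes. For the outer threshold, the main obstacle is that $r$ is shared and there is no direct repetition of the existential quantifier. I would attack it the standard way from \cite{GKLO22}. Split a longer public string into $\ell$ blocks $r_1,\dots,r_\ell$ and add $O(\log \ell)$ bits naming a good block $j$ together with the program $d_j$ for it. Then the probability that no block is good is at most $(1/3)^\ell$, at a cost of only logarithmic, rather than constant, overhead. So for the outer threshold I expect to obtain equivalence only up to $O(\log)$ terms, unless the constants are fixed in advance, where $\ell$ is constant and the overhead is $O(1)$. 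That fixed-constant case is the form the proposition claims.
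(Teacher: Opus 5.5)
Your proposal is correct and follows the paper's own argument for the inner threshold: rerun the program $O(\lambda)$ times on the same public string $r$ with fresh measurements, output the majority, and conclude by a Chernoff bound plus the fact that only one outcome can have per-run probability above $1/2$. Your block-splitting argument for the outer threshold, which costs $O(\log \ell)$ bits and hence $O(1)$ for fixed constants, spells out what the paper calls ``standard arguments''; likewise, your remark that a growing $\lambda$ must be hardwired or paid for with $O(\log\lambda)$ bits makes explicit a caveat hidden in the paper's phrase ``the modified universal machine.''
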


\subsection{Gap problems, hardness, probability estimation}

For a complexity measure $\mu$ and $s_2 - s_1 = \omega(\log n)$, the promise problem $\Gap\mu[s_1, s_2] = (\Lyes, \Lno)$ has yes-instances $\mu(x) \le s_1(|x|)$ and no-instances $\mu(x) \ge s_2(|x|)$; for $\mu = \Kol$ this is the $\GapK$ convention of \cite{HM25,CGGH25}, itself in the GapMINKT tradition. Weak quantum-average-hardness, following \cite{HM25}: there are $k > 1$ and a QPT sampler $\mathcal{Q}$ such that every uniform QPT $\mathcal{A}$ satisfies, for all large $n$,
\begin{multline}
\label{eq:weakhard}
\Pr_{x \leftarrow \mathcal{Q}}\big[\no \leftarrow \mathcal{A}(x) \wedge x \in \Lyes\big] \\ + \Pr_{x \leftarrow \mathcal{Q}}\big[\yes \leftarrow \mathcal{A}(x) \wedge x \in \Lno\big] \ge n^{-k}.
\end{multline}
Quantum probability estimation ($\QPE$) is quantum-average-hard \cite{HM25,KT24b} if there are $c > 1$, $q$, and a QPT sampler $\mathcal{Q}$ such that every QPT $\Est$ satisfies, for all large $n$,
\begin{equation}
\label{eq:qpe}
\Pr_{x \leftarrow \mathcal{Q}}\big[\tfrac{1}{c} p_{\mathcal{Q}}(x) \le \Est(x) \le c\, p_{\mathcal{Q}}(x)\big] \le 1 - n^{-q}.
\end{equation}
The known characterization \cite{HM25,CGGH25,KT24b}: one-way puzzles exist iff $\GapK[n - n^{\eps}, n - \Delta]$ is weakly quantum-average-hard for some $\eps, \Delta$, iff $\QPE$ is quantum-average-hard. We use two of its components as black boxes: the distributional-puzzle equivalence of \cite{CGG24}, and the equivalence of puzzles with non-uniform quantum pseudorandom generators admitting exponentially-far amplification \cite{KT24a,CGG24,HM25}.
\section{The quantum coding theorem}
\label{sec:proofB}

\begin{theorem}[Optimal-length coding in root time]
\label{thm:codingroot}
There is a constant $c$ such that the following holds. Let $\mathcal{Q}$ be a uniform QPT sampler with running time $T(n) \ge n$, let $n$ be sufficiently large, let $\delta \in (0,1]$, and let $t$ be any time bound with $t(n) \ge \sqrt{1/\delta}\,(n T(n))^{c}$. Then every $x \in \{0,1\}^n$ with $p_{\mathcal{Q}}(x) \ge \delta$ satisfies
\begin{equation}
\pKqt(x) \ \le\ \log(1/\delta) + c \log\big(n T(n)\big).
\end{equation}
\end{theorem}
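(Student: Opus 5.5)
The plan is to build an explicit description from which a universal quantum machine recovers $x$, using a hash fingerprint plus amplitude amplification. Set $m := \lceil \log(1/\delta) \rceil + a\lceil\log(nT)\rceil$ for a constant $a$ fixed later. The public randomness $r$, of length $t(n) \ge nm$, is read as the seed of a pairwise-independent hash $h_r : \{0,1\}^n \to \{0,1\}^m$. The program $d$ will consist of three pieces: the constant-size code of $\mathcal{Q}$ (uniform, so only $O(1)$ bits beyond $n$); the integers $n$ and $\lceil\log(1/\delta)\rceil$, costing $O(\log n + \log\log(1/\delta))$ bits; and the fingerprint $y = h_r(x)$. We may assume $\log(1/\delta) \le T$, since $p_{\mathcal{Q}}(x) > 0$ forces $p_{\mathcal{Q}}(x) \ge 2^{-O(T)}$ when the gate set has finite precision; otherwise we simply cap $\delta$. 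With this, $|d| \le \log(1/\delta) + c\log(nT)$.

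The first step is a collision analysis over $r$. For fixed $x$, set $I_r := \sum_{x' \ne x,\ h_r(x') = h_r(x)} p_{\mathcal{Q}}(x')$. Pairwise independence gives $\mathbb{E}_r[I_r] \le 2^{-m} \le \delta/(nT)^a$. By Markov, with probability at least $2/3$ over $r$ we get $I_r \le 3\delta (nT)^{-a} \le \delta/100$. Call such $r$ good. For a good $r$, the mass of the branch $G_r := \{x' : h_r(x') = y\}$ satisfies $\delta \le p(G_r) \le 2\delta$, and conditioned on landing in $G_r$, the output equals $x$ with probability at least $1 - 1/100$.

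The second step is the decoder for a good $r$. The universal machine runs the unitary purification $V$ of $\mathcal{Q}(1^n)$, which costs $\mathrm{poly}(T)$ gates after deferring measurements. It then coherently computes the predicate $[h_r(\text{out}) = y]$ and performs amplitude amplification toward that predicate. The initial success amplitude is $\sqrt{p(G_r)} \in [\sqrt{\delta}, \sqrt{2\delta}]$. Since $\delta$ is known to within a factor $2$ from the program, we can choose a fixed iteration count $k \approx \frac{\pi}{4}\arcsin^{-1}(\sqrt{\delta})$. This is the point I expect to need the most care, because an unknown amplitude within a factor-$2$ window does not by itself give success probability near $1$. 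I would handle it by using the fixed-point or exact variant of amplitude amplification from \cite{BHMT02}, or simply by trying a random $k$ from $\{0,\dots,\lceil\sqrt{1/\delta}\,\rceil\}$ as in \cite{BBHT98}. Either choice gives success probability $\Omega(1)$ per attempt with $O(\sqrt{1/\delta})$ calls to $V$ and $V^\dagger$. We then measure, check the hash classically, and repeat $O(1)$ times until the hash matches. The post-selected output is distributed exactly as $\mathcal{Q}$ conditioned on $G_r$, because amplification preserves relative amplitudes inside the good subspace. It therefore equals $x$ with probability at least $0.99 \cdot (1 - 2^{-O(1)}) \ge 2/3$ after constant repetitions.

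Finally, I would combine the pieces. The total time is $\sqrt{1/\delta}\cdot \mathrm{poly}(n,T)$, including the hash evaluations and the universal-machine simulation overhead, and this is at most $t(n)$ for a suitable $c$. Since good $r$ have probability at least $2/3$ and for each of them the fixed program $d$ (whose fingerprint depends on $r$, which the existential quantifier in Definition \ref{def:pKq} permits) outputs $x$ with probability at least $2/3$, we conclude $\pKqt(x) \le |d|$. If the constants come out slightly short, Proposition \ref{prop:amplification} absorbs them at additive $O(1)$ cost. The main obstacles are the uniform treatment of the unknown amplitude in $[\delta, 2\delta]$ and accounting for the length of $r$ relative to $t$, which holds because $t \ge nm$ for large $n$.
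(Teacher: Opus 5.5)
Your proposal is correct and follows essentially the paper's route. Both use a fingerprint of length $\log(1/\delta)+O(\log nT)$ under a hash keyed by the public randomness, a Markov bound on the colliding mass, and amplitude amplification of the coherently run sampler with unknown initial amplitude. Both also rely on amplification preserving relative amplitudes inside the marked subspace. The paper uses a polynomial-evaluation hash with an $m$-bit seed and a truncated schedule from \cite{BBHT98}, while you use a pairwise-independent hash and random round counts followed by $O(1)$ classical hash checks.

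One slip: $p(G_r)\le 2\delta$ does not follow from $p_{\mathcal{Q}}(x)\ge\delta$, since the marked mass is only known to lie in $[\delta,1]$. This is harmless. By the averaging lemma of \cite{BBHT98}, plus a one-line check near mass $1$, your random-$k$ option with $k$ uniform below $M\ge 1/\sqrt{2\delta}$ succeeds with probability at least $1/4$ for every marked mass in that range. Alternatively, you can first replace $\delta$ by the largest power of two not exceeding $p_{\mathcal{Q}}(x)$, at a cost of at most one bit and a factor $\sqrt{2}$ in time.
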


\begin{proof}
Write $p := p_{\mathcal{Q}}(x) \ge \delta$ and set $m := \lceil \log(1/\delta) \rceil + 4 \lceil \log(n T) \rceil$.

\emph{The hash family.} Identify $\{0,1\}^m$ with the field $\F_{2^m}$ and split $z \in \{0,1\}^n$ into $\ell := \lceil n/m \rceil$ blocks $z_1, \ldots, z_\ell \in \F_{2^m}$, padding with zeros. For a seed $a \in \F_{2^m}$ define $h_a(z) := \sum_{i=1}^{\ell} z_i\, a^{i-1} \in \F_{2^m}$. For distinct $z, z'$ the difference polynomial is nonzero of degree less than $\ell$, so
\begin{equation}
\label{eq:collision}
\Pr_{a \leftarrow \F_{2^m}}\big[h_a(z) = h_a(z')\big] \le \frac{\ell - 1}{2^m} < \frac{n}{2^m};
\end{equation}
this is the polynomial-evaluation instance of universal hashing \cite{CW79}, computable in time $\poly(n)$.

\emph{The description.} Let the public random string $r$ supply the seed $a$ (its first $m$ bits). The program is $d := (\textsf{header}, n, v)$ where $\textsf{header}$ contains the constant-size code of the decoder below together with the constant-size code of the uniform sampler $\mathcal{Q}$, and $v := h_a(x)$. Then $|d| = m + O(\log n) \le \log(1/\delta) + c \log(nT)$ for suitable $c$.

\emph{Collision mass.} Let $B_a := \sum_{y \neq x,\ h_a(y) = h_a(x)} p_{\mathcal{Q}}(y)$. By \eqref{eq:collision} and linearity, $\mathbb{E}_a[B_a] \le n\, 2^{-m} \le \delta\, n\, (nT)^{-4} \le \delta\, (nT)^{-3}$. By Markov's inequality,
\begin{equation}
\Pr_{a}\big[B_a > \delta/20\big] \le 20\,(nT)^{-3} \le \tfrac{1}{10}
\end{equation}
for large $n$. Call $a$ \emph{good} otherwise; goodness depends only on $r$.

\emph{The decoder.} On $(d, r)$: parse $a$ and $v$; let $A$ be the unitary that runs $\mathcal{Q}(1^n)$ coherently, all measurements deferred, so that $A|0\rangle = \sum_z \sqrt{p_{\mathcal{Q}}(z)}\, |z\rangle |\varphi_z\rangle$ for some garbage states $|\varphi_z\rangle$; let $\Pi$ project onto basis states of the output register with $h_a(z) = v$, implemented coherently in $\poly(n)$ gates by computing the hash, comparing, and uncomputing. Run the amplitude amplification search with unknown initial success probability, an instance of quantum rejection sampling \cite{BBHT98,BHMT02,ORR13}, on $(A, \Pi)$, truncated after $C \sqrt{1/\delta}$ rounds for an absolute constant $C$, and output the measured output register.

\emph{Analysis.} The initial success probability is $s_a := p + B_a \ge \delta$, so the truncated search returns a sample from the marked subspace with probability at least $9/10$ \cite{BBHT98}. Amplitude amplification rotates within the plane spanned by $\Pi A |0\rangle$ and its complement, preserving relative amplitudes inside the marked subspace; hence, conditioned on success, the output is distributed as $p_{\mathcal{Q}}(\cdot)/s_a$ on hash-matching strings, and equals $x$ with probability
\begin{equation}
\frac{p}{p + B_a} \ \ge\ \frac{p}{p + \delta/20}\ \ge\ \frac{20}{21}
\end{equation}
whenever $a$ is good, using $p \ge \delta$. For good $a$ the inner success probability is therefore at least $(9/10)(20/21) > 2/3$, and $a$ is good with probability at least $9/10 > 2/3$ over $r$. The decoder's running time is $O(\sqrt{1/\delta})$ coherent executions of $A$, $A^{-1}$, and the $\poly(n)$-gate reflection, hence at most $\sqrt{1/\delta}\,(nT)^{c}$ for suitable $c$, within the allotted $t$. All requirements of Definition \ref{def:pKq} are met.
\end{proof}

\begin{remark}[What is quantum here]
\label{rem:whatisquantum}
Two things, and both are essential. First, decoding a quantum sampler's output requires running the sampler, so unless QPT sampling is classically simulable, no classical-machine measure admits a subexponential coding theorem for quantum samplers; the inefficient enumeration of \cite{HM25,CGGH25} certifies plain $\Kol$ only. Second, the classical version of the identical construction, sample and compare fingerprints, needs $\Theta(1/\delta)$ trials, and indeed $\Omega(1/\delta)$ black-box probes are necessary even to estimate the mass classically \cite{LOZ22}; the quadratic saving is exactly Grover's. For \emph{classical} samplers, by contrast, seed-space techniques achieve optimal length in polynomial time \cite{LOZ22}, so the quantum case is where root time is currently the frontier. In the regime the characterization needs, $\delta \approx 2^{-n + n^{\tau}}$, the improvement is from $2^{n - o(n)}$ to $2^{n/2 - o(n)}$ time. A numerical validation of the full construction, exercising the hash, the amplification schedule, and the conditional-distribution step on exactly simulated samplers, is reported in Appendix \ref{app:decoder}.
\end{remark}

\begin{conjecture}[Polynomial-time quantum coding]
\label{conj:codingpoly}
Theorem \ref{thm:codingroot} holds with $t(n) = (n T(n))^{O(1)}$, independent of $\delta$.
\end{conjecture}

\begin{remark}[Obstacles and strategy]
\label{rem:obstacles}
The optimal classical theorem, for $\pKt$, is proved in \cite{LOZ22} through seed-space hashing: the public randomness specifies a function $H_w$ mapping a short index $v$, of length $\log(1/\delta) + O(\log T)$, into the sampler's random tape, generated by a pseudorandom generator against constant-depth circuits, such that for most $w$ some index satisfies $M(H_w(v)) = x$; decoding is then a single deterministic run of the sampler on the tape $H_w(v)$, in polynomial time. Every ingredient manipulates the deterministic map from random tape to output. A quantum sampler has no such map: its randomness is amplitude, there is no tape to index, and the certifying predicate, that some index drives the sampler to $x$, is not even well defined. The black-box route is closed sharply: the query complexity of quantum resampling is tightly characterized \cite{ORR13}, so any decoder with only state-preparation oracle access to the sampler inherits the $\Omega(\sqrt{1/\delta})$ behavior of Theorem \ref{thm:codingroot}, and a proof of the conjecture must therefore use the \emph{code} of the sampler non-trivially, exactly as \cite{LOZ22} uses the tape structure. We also note that the conditional lower bounds of \cite{LOZ22} against efficient coding under cryptographic exponential-time hypotheses target Levin-style measures with efficient compressors, and do not apply to the public-coin, existential $\pKq$ regime; no barrier of that type is known to block the conjecture. Candidate ingredients: quantum rewinding, gentle-measurement or shadow-tomographic access to intermediate states of $A$, and hybrid descriptions carrying classical transcripts of measurements interleaved with coherent segments. A refutation would be equally significant: a QPT-samplable family of high-mass, incompressible-in-quantum-polynomial-time strings would be a natural hardness candidate, though we caution that hardness of estimation and hardness of compression are not known to imply one another.
\end{remark}

\section{The characterization: proofs of the main theorems}
\label{sec:proofA}

\begin{theorem}[Unconditional characterization at subexponential time]
\label{thm:mainsubexp}
Let $t^{*}(n) := 2^{n/2} n^{c_0}$ for a suitable absolute constant $c_0$. The following are equivalent.
\begin{enumerate}
\item[(i)] One-way puzzles exist.
\item[(ii)] There exist $0 < \eps < 1$ and polynomial-time-computable $\Delta(n) = \omega(\log n)$ such that $\Gap\text{-}\pKqs{t^{*}}[\,n - n^{\eps},\, n - \Delta\,]$ is weakly quantum-average-hard.
\item[(iii)] $\QPE$ is quantum-average-hard.
\end{enumerate}
\end{theorem}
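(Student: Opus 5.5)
The plan is to establish the cycle (i)$\Leftrightarrow$(iii) by citation, then prove (iii)$\Rightarrow$(ii) and (ii)$\Rightarrow$(i), each consuming Theorem \ref{thm:codingroot} at $t^{*}$. The equivalence (i)$\Leftrightarrow$(iii) is the known result of \cite{HM25,KT24b,CGGH25} recalled in the preliminaries, so no time bound enters there. Throughout, the mass threshold of interest is $\delta = 2^{-n+n^{\tau}}$ for some $\tau<\eps$. Plugging this $\delta$ into Theorem \ref{thm:codingroot}, with $T$ the sampler's polynomial running time, gives $\pKqs{t^{*}}(x) \le n - n^{\tau} + O(\log n)$ whenever $p_{\mathcal{Q}}(x)\ge \delta$, provided $t^{*}(n)\ge 2^{(n-n^\tau)/2}(nT)^{c}$. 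This is where $c_0$ is fixed. The one subtlety is that $T$ depends on the sampler while $c_0$ is absolute. I would handle it by padding: I would reparametrize the sampler so that its running time is $O(n)$ on a rescaled length, or equivalently absorb $(nT)^c$ into the slack $2^{-n^{\tau}/2}$, which dominates any polynomial for large $n$. Under the second option any $c_0\ge 1$ works, because $2^{(n-n^\tau)/2}\poly(n) \le 2^{n/2}$ eventually. I expect this to be the cleanest route.

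For (i)$\Rightarrow$(ii), I would follow the technical overview. From a puzzle, \cite{KT24a,CGG24,HM25} give a non-uniform QPRG $G$ on $n$-bit outputs whose output distribution is exponentially far from uniform yet indistinguishable from it. Take $\mathcal{Q}$ to be the even mixture of $G$'s output and $U_n$. On the uniform half, Proposition \ref{prop:counting} shows $\Pr[\pKqs{t^*}(x)\le n-\Delta]\le 2^{-\Delta+2}=\negl$, so uniform samples are no-instances. On the generator half, a mass-partition argument applies: since $G$ has seed length at most $n - n^{\tau'}$ after the amplification (the support is tiny), all but a small fraction of its output mass sits on strings $x$ with $p_G(x)\ge 2^{-n+n^{\tau}}$. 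Those strings also have $p_{\mathcal{Q}}\ge\delta/2$, so the coding bound above makes them yes-instances at $n-n^{\eps}$ once $\tau>\eps$ is chosen appropriately; I would pick $\tau$ and $\eps$ consistently with the amplification. Non-uniform advice is the delicate point, because the coding theorem needs a uniform sampler. To fix this I would code against the uniform sampler $\mathcal{Q}'$ that samples the advice as well, as in \cite{HM25}, which costs an additive $O(\log n)$ or a polynomial mass loss. A decider with error below $n^{-k}$ then distinguishes $G$ from uniform with advantage $\ge 1/2 - O(n^{-k})$, contradicting pseudorandomness. Hence the problem is weakly hard.

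For (ii)$\Rightarrow$(iii), I would argue contrapositively. Assume a QPT $\Est$ approximates $p_{\mathcal{Q}}$ within a factor $c$ except on a $n^{-q}$ fraction, for every $q$. Algorithm \ref{alg:threshold} answers yes iff $\Est(x)\ge 2^{-n+n^{\tau}}$. If it answers yes, then $p_{\mathcal{Q}}(x)\ge \delta/c$, so the coding bound makes $x$ a yes-instance and hence not a no-instance; that error vanishes. If it answers no while $x\in\Lyes$, then $p_{\mathcal{Q}}(x)< c\delta$ and $\pKqs{t^*}(x)\le n-n^{\eps}$. By Proposition \ref{prop:counting} the total mass of such strings is at most $2^{n-n^\eps+2}\cdot c\,2^{-n+n^{\tau}}$, which is negligible since $\tau<\eps$. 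Adding the $n^{-q}$ estimator failures with $q>k$ gives error below $n^{-k}$, contradicting weak hardness. The main obstacle is the bookkeeping between $\tau$, $\eps$, and the non-uniform advice loss in (i)$\Rightarrow$(ii). That step requires $\tau$ large enough for generator mass to clear the coding threshold, whereas (ii)$\Rightarrow$(iii) requires $\tau<\eps$. I would first try decoupling them by letting the sampler in (ii) be the specific mixture built in (i)$\Rightarrow$(ii), where the QPRG's seed deficit can be amplified to $n - n^{\tau}$ for any prescribed $\tau$ via \cite{HM25}'s padding.
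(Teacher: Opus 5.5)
You follow the paper's cycle and constructions: the advice-sampling mixture for (i)$\Rightarrow$(ii), and the thresholding of Algorithm~\ref{alg:threshold} with counting plus coding for (ii)$\Rightarrow$(iii). However, your (ii)$\Rightarrow$(iii) step fails as parametrized.

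\textbf{The threshold in (ii)$\Rightarrow$(iii).} With threshold $2^{-n+n^{\tau}}$ and $\tau<\eps$, Theorem~\ref{thm:codingroot} only gives $\pKqs{t^{*}}(x)\le n-n^{\tau}+O(\log n)$ for accepted $x$. That value lies \emph{above} $n-n^{\eps}$, so ``makes $x$ a yes-instance'' is false. What you actually need is $x\notin\Lno$, i.e.\ $n^{\tau}>\Delta+O(\log n)$, and the hypothesis does not give this. In this direction $\Delta$ is handed to you by (ii) and need only satisfy $\Delta=\omega(\log n)$ and $n^{\eps}-\Delta=\omega(\log n)$. So $\Delta=\lfloor n^{\eps}/2\rfloor$ is admissible and defeats every fixed $\tau<\eps$.

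The threshold must sit strictly inside the gap, and therefore depend on $\Delta$; this is why (ii) demands $\Delta$ polynomial-time computable. Answer yes iff $\Est(x)\ge 2^{-n+m}$ with $m:=(n^{\eps}+\Delta)/2$. Then:
\begin{enumerate}
\item[(a)] Proposition~\ref{prop:counting} bounds the mass of accurately estimated yes-instances that are rejected by $2^{n-n^{\eps}+2}\cdot c\,2^{-n+m}=4c\,2^{-(n^{\eps}-\Delta)/2}$, which is negligible.
\item[(b)] Coding gives $\pKqs{t^{*}}(x)\le n-m+O(\log n)<n-\Delta$ for accepted strings. The sampler-dependent factor $(nT)^{c}$ is absorbed by the superpolynomial slack $2^{-m/2}$.
\end{enumerate}
This is exactly the paper's midpoint threshold $2^{-s+\Delta/2}$.

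\textbf{The closing ``obstacle''.} It is an artifact of using one symbol $\tau$ for two unrelated parameters, and your proposed remedy is not valid. In (ii)$\Rightarrow$(iii) the sampler, $\eps$ and $\Delta$ are arbitrary witnesses. Proving the implication only for the specific mixture from (i)$\Rightarrow$(ii) would not close the cycle. The construction parameter $\tau>\eps$ in (i)$\Rightarrow$(ii) and the decision threshold in (ii)$\Rightarrow$(iii) never interact, so no decoupling is needed.

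\textbf{Smaller points in (i)$\Rightarrow$(ii).} A quantum generator has no seed, and its output distribution typically has full support. So ``seed length $n-n^{\tau'}$'' cannot justify the mass partition. It comes from statistical distance instead: $\SD\ge 1-2^{-n^{\tau'}}$ means $\sum_x\min(p_G(x),2^{-n})\le 2^{-n^{\tau'}}$. That puts all but $2^{-g}$ of the mass on $\{x: p_G(x)\ge 2^{-n+n^{\tau'}-g}\}$. With $g=n^{\eps}$ and $\tau'>\eps$, this is the Claim~4.12 step the paper imports.

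Also, the hardness witness must itself be the uniform advice-sampling $\mathcal{Q}'$, not the non-uniform mixture. Coding against $\mathcal{Q}'$ alone is not enough, since (ii) requires a uniform QPT sampler. The good-advice component then has weight $1/(2n)$, so the distinguishing advantage is $1-O(n^{1-k})$ rather than $1/2-O(n^{-k})$; this is fine for $k\ge 2$.

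\textbf{What you got right.} You note that an absolute $c_0$ cannot absorb $(nT)^{c}$ at $\delta=2^{-n}$, and must instead use the slack from masses well above $2^{-n}$. That is correct, and more careful than the paper's statement that masses at least $2^{-n}$ suffice.
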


\begin{theorem}[Conditional characterization at polynomial time]
\label{thm:mainpoly}
Assume Conjecture \ref{conj:codingpoly}. Then Theorem \ref{thm:mainsubexp} holds with $t^{*}$ replaced by a fixed polynomial.
\end{theorem}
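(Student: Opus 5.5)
The plan is to take the proof of Theorem \ref{thm:mainsubexp} as a template and check that only one component depends on the time bound. That component is the coding theorem: Theorem \ref{thm:codingroot} at $t^{*}=2^{n/2}\poly(n)$, to be replaced by Conjecture \ref{conj:codingpoly}. Fix the polynomial $t^{*}(n) := (nT(n))^{c_1}$, where $T$ and $c_1$ come from the sampler produced in each direction. Conjecture \ref{conj:codingpoly} then says that every $x$ with $p_{\mathcal{Q}}(x) \ge \delta$ has $\pKqs{t^{*}}(x) \le \log(1/\delta) + c\log(nT)$, for every $\delta$. I will go through the cycle (i)$\Rightarrow$(ii)$\Rightarrow$(iii)$\Rightarrow$(i) and confirm that all other ingredients are time-agnostic.

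\textbf{(iii)$\Rightarrow$(i).} This is the known equivalence of QPE hardness with puzzles \cite{HM25,KT24b,CGGH25}. It does not mention $\pKq$ at all.

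\textbf{(i)$\Rightarrow$(ii).} From a puzzle, use \cite{KT24a,CGG24,HM25} to obtain a non-uniform quantum PRG $G$. Its output distribution is exponentially far from uniform but indistinguishable from it. Mix $G$'s output with the uniform distribution $U_n$ to get the hard sampler. Suppose a QPT decider solves $\Gap\text{-}\pKqs{t^{*}}[n-n^{\eps}, n-\Delta]$ on this mixture. The argument has three parts.
\begin{enumerate}
\item Uniform strings are no-instances except with probability $2^{-\Delta+2}$. This follows from Proposition \ref{prop:counting}, which holds for every $t$.
\item By a mass partition, most of $G$'s output mass lies on strings $x$ with $p_G(x) \ge 2^{-n+n^{\tau}}$.
\item The coding statement applied to these strings gives $\pKqs{t^{*}}(x) \le n - n^{\tau} + O(\log n) \le n - n^{\eps}$ for $\eps<\tau$, so they are yes-instances.
\end{enumerate}
In step 3 the polynomial bound is now used at exponentially small $\delta$. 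This is exactly where Conjecture \ref{conj:codingpoly}, being independent of $\delta$, replaces the root-time bound. The decider therefore distinguishes $G$ from uniform, a contradiction. One subtlety is that $G$ is non-uniform. I will handle this the same way the subexponential proof does: either absorb the advice into the program length, which costs an additive advice-length term covered by the slack between $n^{\tau}$ and $n^{\eps}$, or apply the coding statement to a uniform sampler that guesses the advice, which loses only a $2^{-|\text{advice}|}$ factor in mass.

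\textbf{(ii)$\Rightarrow$(iii).} Argue by contrapositive. If QPE is easy, threshold the estimate $\Est(x)$ at roughly $2^{-n+n^{\eps}}\cdot\poly(n)$, as in Algorithm \ref{alg:threshold}. A no-instance that is declared simple has estimated mass at least this threshold. Conjecture \ref{conj:codingpoly} then forces its $\pKqs{t^{*}}$ below $n-\Delta$, which contradicts its being a no-instance, so no such errors occur. The other error type is a yes-instance with small mass. By Proposition \ref{prop:counting} there are at most $2^{n-n^{\eps}+2}$ yes-instances, so their total mass under the small-mass condition is at most $2^{-\Omega(n^{\eps})}$. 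The remaining error is the failure probability of the estimator, and together these make the decider's error smaller than $n^{-k}$. This contradicts weak hardness, with the constants chosen as in the subexponential proof.

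I expect the main obstacle to be bookkeeping the sampler dependence. The conjecture's polynomial depends on $T$, and $T$ differs across the two directions: in one it is the running time of the mixture sampler, in the other that of the arbitrary hard sampler $\mathcal{Q}$. So I must show that a single fixed polynomial $t^{*}$ works. My first attempt will be to fix $t^{*}$ after the sampler in each direction, which the existential quantifier in (ii) permits. For (ii)$\Rightarrow$(iii), a given $t^{*}$ must cover an arbitrary $\mathcal{Q}$; I will handle this by padding, replacing $\mathcal{Q}$ by a sampler on inputs of length $n^{c}$ so that its running time becomes quasi-linear relative to the new length.
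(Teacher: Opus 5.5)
Your route is the paper's: the same cycle (iii)$\Rightarrow$(i)$\Rightarrow$(ii)$\Rightarrow$(iii), prior work for the first edge, counting plus exactly one coding invocation in each of the other two edges, and advice guessing for the non-uniform generator. Two points need repair. The second is one the paper's proof also leaves open.

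\emph{The threshold in (ii)$\Rightarrow$(iii) is at the wrong end of the gap.} You threshold at $\theta\approx 2^{-n+n^{\eps}}\poly(n)$. Then the small-mass yes-instances have total mass up to about $2^{n-n^{\eps}+2}\,\theta$, which is $\poly(n)$, not $2^{-\Omega(n^{\eps})}$, so the counting half gives nothing. The threshold has to sit strictly inside the gap, at its geometric midpoint, as in Algorithm~\ref{alg:threshold}. The paper uses $2^{-s+\Delta/2}$, with $s$ the no-threshold and $\Delta$ the gap width; in your notation this is $\theta=2^{-n+(n^{\eps}+\Delta)/2}$. With that choice the yes-side error mass is $O(2^{-(n^{\eps}-\Delta)/2})$. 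Coding gives $\pKqs{t^*}\le (n-\Delta)-(n^{\eps}-\Delta)/2+O(\log n)<n-\Delta$. Both bounds use $n^{\eps}-\Delta=\omega(\log n)$.

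\emph{The sampler dependence.} You are right that this is the crux, and the paper's proof does not address it; it applies the coding statement ``at the appropriate $t$''.

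In Theorem~\ref{thm:mainsubexp} the issue is harmless. Every mass compressed there is at least $2^{-n+\omega(\log n)}$, so $\sqrt{1/\delta}\,(nT)^{c}\le 2^{n/2}$ for every polynomial $T$. Conjecture~\ref{conj:codingpoly} has no such slack.

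In (i)$\Rightarrow$(ii) you need not move $t^*$ inside the existential; the theorem fixes it outside, so your reading changes the statement. Instead, append fresh uniform bits to the generator's output so that its running time is quasi-linear in the output length $N$. Indistinguishability and statistical distance are preserved, and $\tau$ shrinks only by a constant factor. Since $\eps$ is existential, a fixed $t^*(N)=N^{O(1)}$ then works.

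In (ii)$\Rightarrow$(iii), however, your padding does not work, and this is where the gap lies.
\begin{enumerate}
\item Padding the input $1^n$ leaves $|x|$ unchanged, hence $t^*(|x|)$ and the instance itself are unchanged.
\item Padding the outputs changes the instances. Transferring the given hardness to the padded sampler requires a no-instance $x$ (large $\pKqs{t^*}$ at length $n$) to remain a no-instance after padding, now measured at time $t^*(N)$.
\item That is no longer guaranteed once $N$ is large enough to make the sampler quasi-linear. At that point $t^*(N)$ is enough time to run the conjectured decoder for the original sampler. That decoder compresses precisely the high-mass no-instances that event (b) is about.
\end{enumerate}
So for a hard sampler slower than $t^*$ accommodates, event (b) is uncontrolled in both your argument and the paper's. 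Closing it needs a different formulation, for example bounding the running time of the samplers in (ii) in terms of $t^*$, or building a time gap into the promise problem in the GapMINKT style.
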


\begin{proof}[Proof of Theorems \ref{thm:mainsubexp} and \ref{thm:mainpoly}]
We prove the cycle (iii) $\Rightarrow$ (i) $\Rightarrow$ (ii) $\Rightarrow$ (iii). The first edge is unconditional and due to prior work; the other two consume a coding theorem exactly once each, instantiated by Theorem \ref{thm:codingroot} at $t = t^{*}$ for Theorem \ref{thm:mainsubexp} and by Conjecture \ref{conj:codingpoly} for Theorem \ref{thm:mainpoly}; all masses to be compressed below are at least $2^{-n}$, for which $t^{*}(n) = 2^{n/2} n^{c_0}$ suffices in Theorem \ref{thm:codingroot}. Parameter bookkeeping follows \cite{HM25} and we present the argument at the level of that source's lemmas, flagging our replacements.

\emph{(iii) $\Rightarrow$ (i).} This is the unconditional direction of \cite{HM25,KT24b,CGGH25}. In contrapositive: if puzzles do not exist then distributional puzzles do not exist \cite{CGG24}, yielding for every QPT sampler a bitwise extrapolator $\Extr$ whose conditional distributions are inverse-polynomially correct; Algorithm \ref{alg:estimate} multiplies empirical estimates of the $n$ conditional bit probabilities into a multiplicative estimator for $p_{\mathcal{Q}}$, whose accumulated-error analysis is Lemma 4.3 of \cite{HM25}. Hence $\QPE$ is easy.

\emph{(ii) $\Rightarrow$ (iii).} In contrapositive, following the shape of Lemma 4.2 of \cite{HM25}, itself sharpening \cite{IRS21}. Let $\mathcal{Q}$ witness (ii) with thresholds $s_1 = s - \Delta$, $s_2 = s$, $s \le n$, and suppose a QPT $\Est$ achieves \eqref{eq:qpe} with constant $c' \in (1, 100/99]$. Algorithm \ref{alg:threshold} outputs $\yes$ iff $\Est(y) \ge 2^{-s + \Delta/2}$. Error events: (a) $y \in \Lyes$, that is $\pKqs{t}(y) \le s - \Delta$, but $\Est(y)$ small. When the estimator is accurate this forces $p_{\mathcal{Q}}(y) < \tfrac{100}{99} 2^{-s+\Delta/2}$; by Proposition \ref{prop:counting} there are at most $2^{s - \Delta + 2}$ such simple strings, so their total mass is at most $2^{s-\Delta+2} \cdot \tfrac{100}{99} 2^{-s + \Delta/2} \le 2^{-\Delta/3}$ for large $n$: this is Claim 4.6 of \cite{HM25} with plain-$\Kol$ counting replaced by Proposition \ref{prop:counting}, at the cost of an immaterial constant. (b) $y \in \Lno$, that is $\pKqs{t}(y) \ge s$, but $\Est(y)$ large. Accuracy forces $p_{\mathcal{Q}}(y) \ge \tfrac{99}{100} 2^{-s + \Delta/2} \ge 2^{-n}$, and the coding theorem (at the appropriate $t$) gives $\pKqs{t}(y) \le s - \Delta/2 + O(\log n) < s$, a contradiction; so event (b) has probability zero up to estimator failure. This replaces the inefficient enumeration machine of Claim 4.7 of \cite{HM25}, the single point where plain $\Kol$ was previously required. Summing, the adversary violates \eqref{eq:weakhard}, contradicting (ii).

\emph{(i) $\Rightarrow$ (ii).} Puzzles yield non-uniform quantum pseudorandom generators \cite{KT24a,CGG24}, amplifiable by parallel repetition and padding to be $(1 - 2^{-n^{\tau}})$-far from uniform while computationally indistinguishable, with $\log n$ bits of advice (Lemmas 2.4 and 2.7 of \cite{HM25}). Fix $\eps$, set $\tau := (1+\eps)/2$ and $G := 2^{-n^{\eps}}$, and let $\mathcal{Q}$ sample advice $\mu \leftarrow [n]$ and output, with probability $1/2$ each, a uniform string or a generator sample $\mathsf{Gen}(1^n, \mu)$. The mass-partition argument of Claim 4.12 of \cite{HM25} shows that except with probability $G + 2^{-n^{\tau}}$, a sample of $\mathsf{Gen}(1^n, \mu^{*})$ under the good advice lands in the set $C$ of strings with $p_{\mathcal{Q}} \ge \tfrac{1}{2n}\, G\, 2^{-n + n^{\tau}} \ge 2^{-n}$; the coding theorem compresses every $y \in C$ to
\begin{equation}
\pKqs{t}(y) \ \le\ n - n^{\tau} + n^{\eps} + O(\log n) \ \le\ n - n^{\eps}
\end{equation}
for large $n$, so $C \subseteq \Lyes$: the second and final replacement of the enumeration machine. Uniform samples satisfy $\pKqs{t} \ge n - \Delta$ except with probability at most $2^{-\Delta + 2}$ by Proposition \ref{prop:counting}, so they are no-instances up to negligible loss. A QPT decider beating the trivial error on $\mathcal{Q}$ by $n^{-k}$ therefore distinguishes $\mathsf{Gen}(1^n, \mu^{*})$ from uniform with advantage $n^{-k-1}$, the $1/n$ advice loss absorbed as in Theorem 4.5 and Lemma 4.4 of \cite{HM25}, contradicting indistinguishability and hence (i)'s security. This establishes (ii) and closes the cycle.
\end{proof}

\begin{remark}
The two coding invocations are the only time-sensitive steps: counting, extrapolation, distributional equivalence, and generator amplification are agnostic to $t$. Theorem \ref{thm:mainsubexp} is therefore the best unconditional point currently reachable on the time axis, and Conjecture \ref{conj:codingpoly} is equivalent to sliding it to polynomial time by exactly this proof.
\end{remark}

\begin{algorithm}[t]
\caption{$\Est(y)$: probability estimation from bitwise extrapolation \cite{CGG24,HM25}}
\label{alg:estimate}
\KwIn{$y = (y_1, \ldots, y_n) \in \{0,1\}^n$; oracle access to the QPT extrapolator $\Extr$; accuracy parameter $q$}
\KwOut{an estimate of $p_{\mathcal{Q}}(y)$}
\For{$i = 1$ \KwTo $n$}{
  run $b \leftarrow \Extr(1^n, i, y_1, \ldots, y_{i-1})$ independently $N = n^{100 + 100q}$ times\;
  $\widetilde{p}[y_i] \leftarrow \#\{\text{runs outputting } y_i\} / N$\;
}
\Return $\prod_{i=1}^{n} \widetilde{p}[y_i]$\;
\end{algorithm}

\begin{algorithm}[t]
\caption{Threshold decision for $\Gap\text{-}\pKqt[s - \Delta, s]$ from a $\QPE$ estimator \cite{HM25,IRS21}}
\label{alg:threshold}
\KwIn{$y \leftarrow \mathcal{Q}(1^n)$; a QPT estimator $\Est$; thresholds $s, \Delta$}
\KwOut{$\yes$ or $\no$}
compute $e \leftarrow \Est(y)$\;
\eIf{$e \ge 2^{-s+\Delta/2}$}{\Return $\yes$ \tcp*[r]{sound by the coding theorem}}{\Return $\no$ \tcp*[r]{sound by Proposition \ref{prop:counting}}}
\end{algorithm}

\begin{algorithm}[t]
\caption{The decoder of Theorem \ref{thm:codingroot}}
\label{alg:decoder}
\KwIn{program $d = (\textsf{header}, n, v)$; public randomness $r$ supplying the seed $a$}
\KwOut{a string in $\{0,1\}^n$}
build the coherent sampler $A$ from the code of $\mathcal{Q}$ in \textsf{header}\;
build the reflection about $\{z : h_a(z) = v\}$ via hash, compare, uncompute\;
run amplitude amplification with unknown initial amplitude \cite{BBHT98,BHMT02}, truncated at $C\sqrt{1/\delta}$ rounds\;
\Return the measured output register\;
\end{algorithm}

\begin{figure}[t]
\centering
\resizebox{0.9\linewidth}{!}{%
\begin{tikzpicture}[
  box/.style={draw, rounded corners=2pt, minimum width=3.4cm, minimum height=0.85cm, align=center, font=\small},
  uncond/.style={-{Stealth[length=2.4mm]}, semithick},
  cond/.style={-{Stealth[length=2.4mm]}, semithick, blue!60!black},
  lab/.style={font=\scriptsize, align=center}
]
\node[box] (puzz) {OWPuzz};
\node[box, right=3.2cm of puzz] (gen) {exp.-far nuQPRG};
\node[box, below=1.7cm of gen] (gap) {$\Gap\text{-}\pKqt$ weakly hard};
\node[box, below=1.7cm of puzz] (qpe) {$\QPE$ hard};

\draw[uncond] (puzz) to node[lab, above] {\cite{KT24a,CGG24}\\ \cite{HM25} Lem.~2.7} (gen);
\draw[cond] (gen) to node[lab, right] {coding: Thm.~\ref{thm:codingroot} at $t^{*}$,\\ Conj.~\ref{conj:codingpoly} at $\poly$} (gap);
\draw[cond] (gap) to node[lab, below] {coding $+$ Prop.~\ref{prop:counting},\\ Alg.~\ref{alg:threshold}, \cite{HM25} Lem.~4.2} (qpe);
\draw[uncond] (qpe) to node[lab, left] {Alg.~\ref{alg:estimate}\\ \cite{HM25} Lem.~4.3, \cite{CGG24}} (puzz);
\end{tikzpicture}}%
\caption{Architecture of the characterization. Black edges are unconditional prior work; blue edges consume a coding theorem, unconditionally supplied at $t^{*} = 2^{n/2}\poly(n)$ by Theorem \ref{thm:codingroot} and at polynomial time by Conjecture \ref{conj:codingpoly}.}
\label{fig:architecture}
\end{figure}

\section{Amplification and robustness}
\label{sec:ampl}

On the puzzle side, amplification is settled: one-way puzzles are existentially equivalent to distributional one-way puzzles \cite{CGG24} and to non-uniform quantum pseudorandom generators \cite{KT24a,CGG24}, and these equivalences are what the proofs above consume. On the measure side, Proposition \ref{prop:amplification} makes the success thresholds of $\pKqt$ inessential. What remains open is amplification of the \emph{hardness} itself.

\begin{conjecture}[Meta-side amplification]
\label{conj:amp}
Weak quantum-average-hardness of $\Gap\text{-}\pKqt$ for some gap $\Delta = \omega(\log n)$ is equivalent to its strong quantum-average-hardness, with error $1/2 - 1/\poly$, for related parameters; and the hardness is robust under widening the gap to $n^{\eps'}$ and under all sufficiently large polynomial time bounds simultaneously.
\end{conjecture}

Classically the corresponding translations are supplied by \cite{IRS21,GKLO22}, powered by symmetry-of-information and language-compression theorems for $\pKt$; the $\pKqt$ analogues of those structural tools are open, and direct-product amplification must additionally survive quantum adversaries entangled across instances. Any progress here would also decouple the parameters of Theorem \ref{thm:mainsubexp} from the specific thresholds inherited from \cite{HM25}.

\section{The barrier}
\label{sec:barrier}

\begin{conjecture}[No string-valued characterization of OWSGs]
\label{conj:barrier}
There is a unitary oracle $\mathcal{O}$ relative to which one-way state generators exist, while for every uniform QPT$^{\mathcal{O}}$ sampler over classical strings and every time bound $t$, the problem $\Gap\text{-}\pKq^{t,\mathcal{O}}$ is easy on average for QPT$^{\mathcal{O}}$ adversaries; and more generally every member of a suitably formalized class of string-valued meta-complexity estimation problems is easy. Consequently no relativizing string-based characterization of one-way state generators exists.
\end{conjecture}

\paragraph{Status of the pieces.} The unconditional anchor is the barrier of \cite{CGGH25}: no relativizing characterization of one-way \emph{puzzles} by the average-case hardness of a problem in $\mathsf{NP}$ or $\mathsf{QMA}$; part of Conjecture \ref{conj:barrier} is the definitional task of formalizing ``string-valued meta-complexity problem'' with comparable care one level down the zoo. The construction we envisage is Kretschmer-style \cite{Kre21}: take $\mathcal{O}$ to contain a family of Haar-random unitaries $U_k$, define the state generator $k \mapsto U_k |0\rangle$, and prove security against all QPT$^{\mathcal{O}}$ adversaries at once by concentration of measure over the Haar ensemble combined with hybrid arguments in the Bennett, Bernstein, Brassard, and Vazirani tradition; pair the Haar part with a classical oracle powerful enough, in the spirit of the $\mathsf{P} = \mathsf{NP}$ world of \cite{KQST23}, to estimate the probabilities and complexities of all string-valued samples, which collapses every $\Gap\text{-}\pKq^{t,\mathcal{O}}$ instance by exactly the thresholding of Algorithm \ref{alg:threshold} run in reverse. The delicate point is that unitary security must survive the classical power; that this is possible is precisely the moral of the one-query unitary-synthesis lower bound of \cite{LMW24}, and the reason the proof must be a random-oracle argument rather than a diagonalization: diagonalization defeats countably many machines one at a time, whereas Haar concentration grants the measure-one simultaneous security that a cryptographic statement against all adversaries requires.

\paragraph{What a state-native measure must do.} Circumventing the barrier from below is not hopeless, but it exacts a price the recent EFI characterization already paid \cite{CCCGHJL26}: the complexity measure must take \emph{quantum states} as instances. A density-matrix quantum Kolmogorov complexity fit for purpose must satisfy at least: single-copy estimability, since samplers hand out one copy; a counting or Fano-type bound replacing Proposition \ref{prop:counting}, bounding how many far-apart states admit short descriptions; and meaningful behavior under the unitary invariance that Haar oracles exploit, which string-valued measures lack by construction. The descriptive traditions of \cite{BvDL01,Vit01,Gacs01} supply candidate definitions; which of them supports an analogue of Theorem \ref{thm:codingroot} is, in our view, the right question for the axis below puzzles.

\section{Open problems}
\label{sec:open}

\begin{openproblem}
Prove or refute Conjecture \ref{conj:codingpoly}. A proof completes Theorem \ref{thm:mainpoly} into the exact polynomial-time analogue of the Liu and Pass characterization; a refutation exhibits high-mass quantumly samplable strings that no polynomial-time quantum decoder can reconstruct from optimally short advice, itself a striking object.
\end{openproblem}

\begin{openproblem}
Close the time gap from both ends: improve $t^{*} = 2^{n/2}\poly(n)$ in Theorem \ref{thm:mainsubexp} unconditionally, or prove that black-box use of the sampler cannot do better, making the Grover-optimality heuristic of Remark \ref{rem:obstacles} a theorem.
\end{openproblem}

\begin{openproblem}
Prove Conjecture \ref{conj:amp}, beginning with symmetry of information for $\pKqt$.
\end{openproblem}

\begin{openproblem}
Prove Conjecture \ref{conj:barrier}, and characterize one-way state generators and EFI pairs in the time-bounded regime by state-native measures, extending \cite{CCCGHJL26} down the time axis.
\end{openproblem}

\begin{openproblem}
Quantum Pessiland: does a relativized world exist in which some $\Gap\text{-}\pKqt$ is weakly quantum-average-hard over a quantumly samplable distribution and yet one-way puzzles do not exist? Theorem \ref{thm:mainsubexp} excludes it unrelativized at $t \ge t^{*}$, and Theorem \ref{thm:mainpoly} would exclude it at polynomial time; a relativized construction below $t^{*}$ would delimit the unconditional reach of the present techniques.
\end{openproblem}

\begin{openproblem}
Uniform versus non-uniform: our hardness notions and adversaries are uniform, following \cite{HM25}; determine whether the characterization survives non-uniform adversaries and advice-laden samplers, where the $\log n$ advice of the pseudorandom generator step currently sits.
\end{openproblem}

\begin{openproblem}
Relate the time-bounded compressibility of pseudorandom state ensembles to the known limits on stretching quantum pseudorandomness: coding theorems bound how much a sampler's outputs can be compressed, stretching bounds how much a short seed can be expanded, and a quantitative bridge between the two would tie this program to the structural theory of pseudorandom states.
\end{openproblem}

\section*{Acknowledgments}
The author thanks colleagues at Sunway University for helpful discussions. Any funding sources should be acknowledged here.

\section*{Code availability}
The exact-simulation scripts generating Figures \ref{fig:surprisal}, \ref{fig:decoder}, and \ref{fig:mechanism} (\texttt{make\_figure.py}, \texttt{decoder\_validation.py}, \texttt{mechanism\_validation.py}) are distributed with the manuscript source and require only NumPy and Matplotlib.

\appendix

\section{Numerical illustration of the mass structure}
\label{app:numerics}

The arguments of Sections \ref{sec:proofB} and \ref{sec:proofA} manipulate one object: the distribution of the surprisal $u(x) = \log_2(1/p_{\mathcal{Q}}(x))$ when $x$ is drawn from the sampler. Yes-instances are manufactured below a surprisal threshold, no-instances certified above one, and the coding theorem's regime is the band around $u \approx n$. We exactly simulate a sampler applying $24$ brickwork layers of independent Haar-random two-qubit gates to $|0^{12}\rangle$, computing the full $2^{12}$-dimensional statevector, so the figure contains no sampling noise. Figure \ref{fig:surprisal} shows the surprisal distribution under $\mathcal{Q}$ against the Porter-Thomas prediction for deep random circuits: with $q = 2^n p$, the sampled surprisal density is $\ln(2)\, q^2 e^{-q}$. The measured Shannon entropy is $H(\mathcal{Q}) = 11.380$ bits against the Porter-Thomas value $n - (1 - \gamma)/\ln 2 = 11.390$ bits, $\gamma$ the Euler-Mascheroni constant: an entropy deficit of $0.620$ measured versus $0.610$ predicted. The largest output mass is $2^{-9.06}$, and $55.0\%$ of the mass lies below $H(\mathcal{Q})$. Essentially all mass sits within a few bits of $u = n$: the regime where the classical decoder pays $2^{n - o(n)}$, Theorem \ref{thm:codingroot} pays $2^{n/2 - o(n)}$, and Conjecture \ref{conj:codingpoly} asks for $\poly(n)$. The figure illustrates the objects in play; it is evidence for nothing. The script (\texttt{make\_figure.py}, seed $20260810$) reproduces it in seconds.

\begin{figure}[t]
\centering
\includegraphics[width=0.86\linewidth]{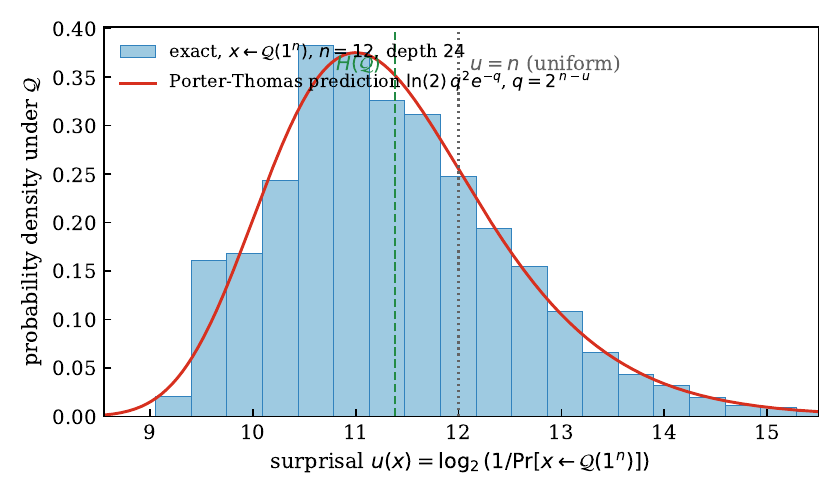}
\caption{Exact surprisal distribution of a depth-$24$ brickwork random-circuit sampler on $n = 12$ qubits (mass-weighted histogram) against the Porter-Thomas prediction (curve). Dashed line: measured entropy $H(\mathcal{Q})$; dotted line: the uniform value $u = n$.}
\label{fig:surprisal}
\end{figure}

\section{Numerical validation of the decoder}
\label{app:decoder}

We implemented the decoder of Theorem \ref{thm:codingroot} exactly as constructed in its proof and tested the four claims the proof rests on. The samplers are exact statevectors on $14$ qubits: a biased product-state family whose output masses span many orders of magnitude, and a depth-$4$ brickwork random circuit. The hash is the polynomial-evaluation family over $\F_{2^m}$ of the proof, with irreducible polynomials found at runtime by Rabin's test; amplitude amplification is implemented through the two reflections directly on the statevector, with the unknown-amplitude schedule of \cite{BBHT98} truncated at $9/\sqrt{\delta}$ rounds. All results below use fixed seed $20260810$ and reproduce in about a minute on a laptop.

First, success: over $26$ targets spanning $\delta \in [2^{-11}, 2^{-2}]$ with $40$ decoder runs each, the overall success rate is $0.915$ and the minimum per-target rate is $0.800$, against the theorem's bound of $2/3$; the residual failures are collisions at the deliberately small hash padding used at this scale, at rate $0.085$, consistent with the collision analysis. Second, scaling (Figure \ref{fig:decoder}, left): the fitted cost exponent in $1/\delta$ is $0.65$ for the quantum decoder against $1.02$ for the classical rejection-sampling baseline on the identical task; the quantum exponent sits above the asymptotic $1/2$ because the schedule constants of \cite{BBHT98} are not negligible at these sizes, while the separation from the classical slope is already an order of magnitude in cost at $\delta = 2^{-11}$. Third, the conditional-distribution step inside the proof's analysis, that amplification preserves relative amplitudes within the marked subspace, holds to machine precision: on the brickwork sampler with $4$ marked strings and $s_a = 1.6 \times 10^{-3}$, the total-variation distance between the post-amplification conditional distribution and $p_{\mathcal{Q}}(\cdot)/s_a$ is $7 \times 10^{-17}$, with amplified marked mass $0.9996$ at the near-optimal round count. Fourth, the collision-mass bound with theorem-matched padding: at $m = \lceil \log(1/\delta) \rceil + 6$ the empirical mean of $B_a$ over $300$ seeds is $3.6 \times 10^{-5}$ against the bound $1.2 \times 10^{-4}$, and the fraction of seeds with $B_a > \delta/20$ is $0.010$, one order below the Markov target of $0.1$.

The experiment validates the construction and its internal lemmas; it is not evidence for or against Conjecture \ref{conj:codingpoly}, whose subject is precisely the regime beyond black-box amplification. The script (\texttt{decoder\_validation.py}) is distributed with the manuscript source.

\begin{figure}[t]
\centering
\includegraphics[width=0.98\linewidth]{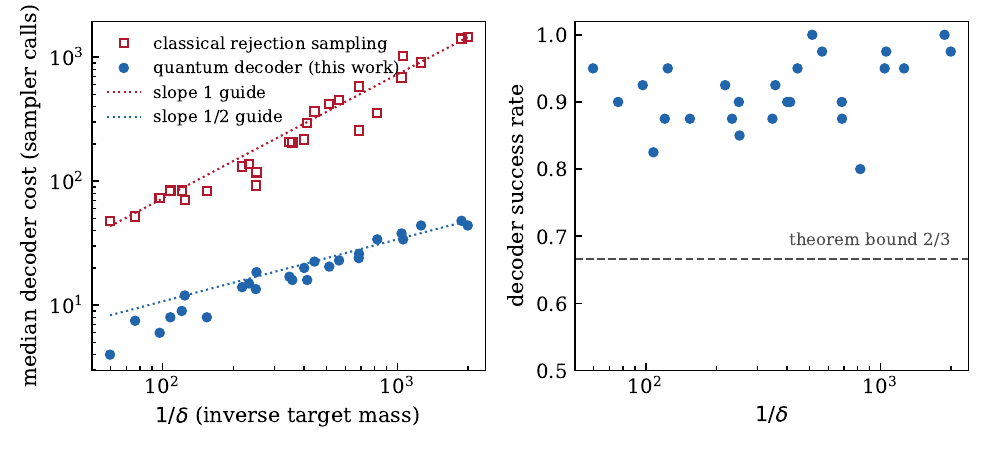}
\caption{Numerical validation of Theorem \ref{thm:codingroot} on exactly simulated $14$-qubit samplers. Left: median decoder cost against $1/\delta$, quantum decoder versus the classical rejection-sampling baseline on the identical task, with slope $1/2$ and slope $1$ guides. Right: per-target success rate against the theorem's $2/3$ bound (dashed).}
\label{fig:decoder}
\end{figure}

\section{Mechanism, tightness, and the partition step}
\label{app:mechanism}

Three further exact experiments probe why the decoder works, why it cannot be accelerated below root time by black-box means, and the one lemma that Theorem \ref{thm:mainsubexp}'s proof imports. The hashing is vectorized through the GF$(2)$-linearity of field multiplication; seed and runtime are as in Appendix \ref{app:decoder}.

\emph{Rotation law and tightness} (Figure \ref{fig:mechanism}, left). For five targets spanning $\delta$ from $2^{-6.6}$ to $2^{-21.1}$, the exact success probability after $j$ amplification rounds equals the closed form $(p_{\mathcal{Q}}(x)/s_a)\sin^2((2j+1)\theta)$ with $\theta = \arcsin\sqrt{s_a}$ to a maximum deviation of $8.5 \times 10^{-13}$ over all targets and rounds; rescaled, all five collapse onto a single sinusoid. The first maximum at $(2j+1)\theta = \pi/2$, that is $j \approx (\pi/4)\sqrt{1/s_a}$, is the round count the decoder uses; since a black-box amplitude-amplification decoder's success is exactly this sinusoid, no such decoder reaches constant success in fewer than $\Theta(\sqrt{1/\delta})$ rounds, which is the empirical face of the optimality discussed in Remark \ref{rem:obstacles} and \cite{ORR13}.

\emph{Scaling decomposition} (Figure \ref{fig:mechanism}, centre). At $n = 16$, over $18$ targets with overall success $0.924$, the median decoder cost fitted against the theorem's parameter $1/\delta$ has exponent $0.575$, already closer to the asymptotic $1/2$ than the $0.65$ seen at $n = 14$ in Appendix \ref{app:decoder}, confirming convergence with system size. Regressed instead against the effective marked mass $1/s_a$, which includes the collision contribution the finite hash cannot avoid, the exponent is $0.556$. The residual gap above $1/2$ in the $1/\delta$ fit is therefore accounted for by hash collisions at the toy padding used here, not by any deviation from amplitude-amplification scaling: measured against its true dynamical predictor, the decoder runs in $\Theta(\sqrt{1/s_a})$ as the analysis requires.

\emph{Mass partition} (Figure \ref{fig:mechanism}, right). Theorem \ref{thm:mainsubexp}'s proof imports the partition step of \cite{HM25} (their Claim 4.12 shape): for a sampler at statistical distance $1 - \epsilon$ from uniform, the mass on strings of probability at least $G \cdot 2^{-n}/\epsilon$ is at least $1 - G - \epsilon$. On the $n = 16$ product sampler, with measured $\epsilon = 0.248$, the bound holds across $G \in [2^{-10}, 2^{-1}]$ with no violation, the measured mass exceeding the bound by a comfortable margin throughout. This is the geometric fact that lets the coding theorem compress generator outputs into yes-instances in the proof of (i) $\Rightarrow$ (ii).

Together with Appendix \ref{app:decoder}, these experiments validate the construction, its scaling, its internal rotation dynamics, and the imported partition lemma, exactly and reproducibly. None bears on Conjecture \ref{conj:codingpoly}. The script (\texttt{mechanism\_validation.py}) accompanies the source.

\begin{figure}[t]
\centering
\includegraphics[width=0.99\linewidth]{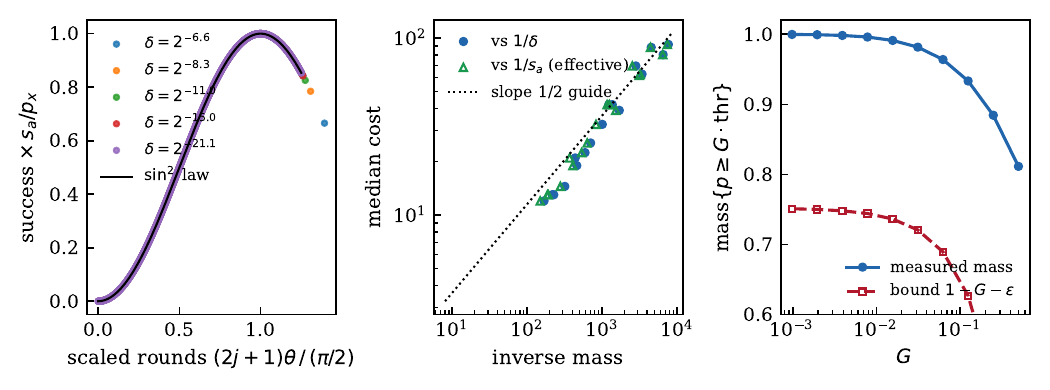}
\caption{Mechanism-level validation. Left: exact success probability after $j$ rounds for five targets spanning fifteen orders of magnitude in mass, collapsed onto the rotation law $\sin^2((2j+1)\theta)$. Centre: median decoder cost at $n=16$ against $1/\delta$ and against the effective marked mass $1/s_a$, with a slope-$1/2$ guide; the $1/s_a$ exponent is $0.556$. Right: measured mass on high-probability strings against the partition bound $1 - G - \epsilon$ of \cite{HM25}.}
\label{fig:mechanism}
\end{figure}


\begin{thebibliography}{99}

\bibitem{LP20} Y.~Liu and R.~Pass. On one-way functions and Kolmogorov complexity. In \emph{Proceedings of the 61st IEEE Symposium on Foundations of Computer Science (FOCS 2020)}, pages 1243-1254, 2020. \href{https://doi.org/10.1109/FOCS46700.2020.00118}{doi:10.1109/FOCS46700.2020.00118}.

\bibitem{IRS21} R.~Ilango, H.~Ren, and R.~Santhanam. Hardness on any samplable distribution suffices: New characterizations of one-way functions by meta-complexity. \emph{Electronic Colloquium on Computational Complexity}, TR21-082, 2021.

\bibitem{GKLO22} H.~Goldberg, V.~Kabanets, Z.~Lu, and I.~C.~Oliveira. Probabilistic Kolmogorov complexity with applications to average-case complexity. In \emph{Proceedings of the 37th Computational Complexity Conference (CCC 2022)}, LIPIcs 234, pages 16:1-16:60, 2022. \href{https://doi.org/10.4230/LIPIcs.CCC.2022.16}{doi:10.4230/LIPIcs.CCC.2022.16}.

\bibitem{LOZ22} Z.~Lu, I.~C.~Oliveira, and M.~Zimand. Optimal coding theorems in time-bounded Kolmogorov complexity. In \emph{Proceedings of the 49th International Colloquium on Automata, Languages, and Programming (ICALP 2022)}, LIPIcs 229, pages 92:1-92:14, 2022. \href{https://doi.org/10.4230/LIPIcs.ICALP.2022.92}{doi:10.4230/LIPIcs.ICALP.2022.92}.

\bibitem{KT24a} D.~Khurana and K.~Tomer. Commitments from quantum one-wayness. In \emph{Proceedings of the 56th ACM Symposium on Theory of Computing (STOC 2024)}, pages 968-978, 2024. \href{https://doi.org/10.1145/3618260.3649654}{doi:10.1145/3618260.3649654}.

\bibitem{KT24b} D.~Khurana and K.~Tomer. Founding quantum cryptography on quantum advantage, or, towards cryptography from \#P-hardness. Cryptology ePrint Archive, Paper 2024/1490, 2024. \href{https://eprint.iacr.org/2024/1490}{eprint.iacr.org/2024/1490}. \href{https://doi.org/10.1145/3717823.3718145}{doi:10.1145/3717823.3718145}.

\bibitem{CGG24} K.-M.~Chung, E.~Goldin, and M.~Gray. On central primitives for quantum cryptography with classical communication. In \emph{Advances in Cryptology, CRYPTO 2024, Part VII}, LNCS 14926, pages 215-248. Springer, 2024.

\bibitem{CGGH25} B.~P.~Cavalar, E.~Goldin, M.~Gray, and P.~Hall. A meta-complexity characterization of quantum cryptography. In \emph{Advances in Cryptology, EUROCRYPT 2025, Part VII}, LNCS 15607, pages 82-107. Springer, 2025. \href{https://arxiv.org/abs/2410.04984}{arXiv:2410.04984}.

\bibitem{HM25} T.~Hiroka and T.~Morimae. Quantum cryptography and meta-complexity. \href{https://arxiv.org/abs/2410.01369}{arXiv:2410.01369}, 2024.

\bibitem{CGGHLP25} B.~Cavalar, E.~Goldin, M.~Gray, P.~Hall, Y.~Liu, and A.~Pelecanos. On the computational hardness of quantum one-wayness. \emph{Quantum}, 9:1679, 2025. \href{https://doi.org/10.22331/q-2025-03-13-1679}{doi:10.22331/q-2025-03-13-1679}.

\bibitem{CCCGHJL26} B.~P.~Cavalar, B.~Chen, A.~Coladangelo, M.~Gray, Z.~Hu, Z.~Ji, and X.~Li. A meta-complexity characterization of minimal quantum cryptography. In \emph{Proceedings of the 58th ACM Symposium on Theory of Computing (STOC 2026)}, pages 675-686, 2026. \href{https://arxiv.org/abs/2510.07859}{arXiv:2510.07859}.

\bibitem{BCQ23} Z.~Brakerski, R.~Canetti, and L.~Qian. On the computational hardness needed for quantum cryptography. In \emph{Proceedings of the 14th Innovations in Theoretical Computer Science Conference (ITCS 2023)}, LIPIcs 251, pages 24:1-24:21, 2023.

\bibitem{JLS18} Z.~Ji, Y.-K.~Liu, and F.~Song. Pseudorandom quantum states. In \emph{Advances in Cryptology, CRYPTO 2018, Part III}, LNCS 10993, pages 126-152. Springer, 2018. \href{https://doi.org/10.1007/978-3-319-96878-0_5}{doi:10.1007/978-3-319-96878-0\_5}.

\bibitem{AQY22} P.~Ananth, L.~Qian, and H.~Yuen. Cryptography from pseudorandom quantum states. In \emph{Advances in Cryptology, CRYPTO 2022, Part I}, LNCS 13507, pages 208-236. Springer, 2022. \href{https://doi.org/10.1007/978-3-031-15802-5_8}{doi:10.1007/978-3-031-15802-5\_8}.

\bibitem{MY22} T.~Morimae and T.~Yamakawa. Quantum commitments and signatures without one-way functions. In \emph{Advances in Cryptology, CRYPTO 2022, Part I}, LNCS 13507, pages 269-295. Springer, 2022. \href{https://doi.org/10.1007/978-3-031-15802-5_10}{doi:10.1007/978-3-031-15802-5\_10}.

\bibitem{MY24} T.~Morimae and T.~Yamakawa. One-wayness in quantum cryptography. In \emph{Proceedings of the 19th Conference on the Theory of Quantum Computation, Communication and Cryptography (TQC 2024)}, LIPIcs 310, pages 4:1-4:21, 2024. \href{https://doi.org/10.4230/LIPIcs.TQC.2024.4}{doi:10.4230/LIPIcs.TQC.2024.4}.

\bibitem{Kre21} W.~Kretschmer. Quantum pseudorandomness and classical complexity. In \emph{Proceedings of the 16th Conference on the Theory of Quantum Computation, Communication and Cryptography (TQC 2021)}, LIPIcs 197, pages 2:1-2:20, 2021. \href{https://doi.org/10.4230/LIPIcs.TQC.2021.2}{doi:10.4230/LIPIcs.TQC.2021.2}.

\bibitem{KQST23} W.~Kretschmer, L.~Qian, M.~Sinha, and A.~Tal. Quantum cryptography in Algorithmica. In \emph{Proceedings of the 55th ACM Symposium on Theory of Computing (STOC 2023)}, pages 1589-1602, 2023. \href{https://doi.org/10.1145/3564246.3585225}{doi:10.1145/3564246.3585225}.

\bibitem{LMW24} A.~Lombardi, F.~Ma, and J.~Wright. A one-query lower bound for unitary synthesis and breaking quantum cryptography. In \emph{Proceedings of the 56th ACM Symposium on Theory of Computing (STOC 2024)}, pages 979-990, 2024. \href{https://doi.org/10.1145/3618260.3649650}{doi:10.1145/3618260.3649650}.

\bibitem{MSY24} T.~Morimae, Y.~Shirakawa, and T.~Yamakawa. Cryptographic characterization of quantum advantage. \href{https://arxiv.org/abs/2410.00499}{arXiv:2410.00499}, 2024.

\bibitem{Yan22} J.~Yan. General properties of quantum bit commitments. In \emph{Advances in Cryptology, ASIACRYPT 2022, Part IV}, LNCS 13794, pages 628-657. Springer, 2022.

\bibitem{BCKM21} J.~Bartusek, A.~Coladangelo, D.~Khurana, and F.~Ma. One-way functions imply secure computation in a quantum world. In \emph{Advances in Cryptology, CRYPTO 2021, Part I}, LNCS 12825, pages 467-496. Springer, 2021.

\bibitem{GLSV21} A.~B.~Grilo, H.~Lin, F.~Song, and V.~Vaikuntanathan. Oblivious transfer is in MiniQCrypt. In \emph{Advances in Cryptology, EUROCRYPT 2021, Part II}, LNCS 12697, pages 531-561. Springer, 2021.

\bibitem{BvDL01} A.~Berthiaume, W.~van Dam, and S.~Laplante. Quantum Kolmogorov complexity. \emph{Journal of Computer and System Sciences}, 63(2):201-221, 2001. \href{https://doi.org/10.1006/jcss.2001.1765}{doi:10.1006/jcss.2001.1765}.

\bibitem{Vit01} P.~M.~B.~Vit\'anyi. Quantum Kolmogorov complexity based on classical descriptions. \emph{IEEE Transactions on Information Theory}, 47(6):2464-2479, 2001. \href{https://doi.org/10.1109/18.945258}{doi:10.1109/18.945258}.

\bibitem{Gacs01} P.~G\'acs. Quantum algorithmic entropy. \emph{Journal of Physics A: Mathematical and General}, 34, 2001. arXiv:quant-ph/0011046. \href{https://doi.org/10.1088/0305-4470/34/35/312}{doi:10.1088/0305-4470/34/35/312}.

\bibitem{LV19} M.~Li and P.~Vit\'anyi. \emph{An Introduction to Kolmogorov Complexity and Its Applications}. Texts in Computer Science, Springer, 4th edition, 2019.

\bibitem{BBHT98} M.~Boyer, G.~Brassard, P.~H{\o}yer, and A.~Tapp. Tight bounds on quantum searching. \emph{Fortschritte der Physik}, 46(4-5):493-505, 1998. arXiv:quant-ph/9605034.

\bibitem{BHMT02} G.~Brassard, P.~H{\o}yer, M.~Mosca, and A.~Tapp. Quantum amplitude amplification and estimation. In \emph{Quantum Computation and Information}, AMS Contemporary Mathematics 305, pages 53-74. American Mathematical Society, 2002. arXiv:quant-ph/0005055.

\bibitem{ORR13} M.~Ozols, M.~Roetteler, and J.~Roland. Quantum rejection sampling. \href{https://arxiv.org/abs/1103.2774}{arXiv:1103.2774}. Conference version in \emph{Proceedings of the 3rd Innovations in Theoretical Computer Science Conference (ITCS 2012)}.

\bibitem{CW79} J.~L.~Carter and M.~N.~Wegman. Universal classes of hash functions. \emph{Journal of Computer and System Sciences}, 18(2):143-154, 1979.

\end{thebibliography}
\end{document}